\RequirePackage{amsmath}

\documentclass[a4paper,UKenglish,runningheads,envcountsame,draft]{llncs}

\usepackage{macros}
\usepackage[utf8]{inputenc}
\bibliographystyle{plainurl}
\pagestyle{plain}
\begin{document}

\title{Parameterised Complexity of\\ Model Checking and Satisfiability\\ in Propositional Dependence Logic\thanks{Funded by German Research Foundation (DFG), project ME 4279/1-2}}

\titlerunning{Parameterised MC and SAT in PDL}

\author{Yasir Mahmood 
\and Arne Meier 
}

\institute{Leibniz Universität Hannover, Institut für Theoretische Informatik
\email{\{mahmood,meier\}@thi.uni-hannover.de}}

\authorrunning{Y.~Mahmood and A.~Meier}

\maketitle             
\begin{abstract}
In this paper, we initiate a systematic study of the parameterised complexity in the field of Dependence Logics which finds its origin in the Dependence Logic of Väänänen from 2007.
We study a propositional variant of this logic (PDL) and investigate a variety of parameterisations with respect to the central decision problems.
The model checking problem (MC) of PDL is NP-complete (Ebbing and Lohmann, SOFSEM 2012).
The subject of this research is to identify a list of parameterisations (formula-size, formula-depth, treewidth, team-size, number of variables) under which MC becomes fixed-parameter tractable.
Furthermore, we show that the number of disjunctions or the arity of dependence atoms (dep-arity) as a parameter both yield a paraNP-completeness result.
Then, we consider the satisfiability problem (SAT) which classically is known to be NP-complete as well (Lohmann and Vollmer, Studia Logica 2013).
There we are presenting a different picture: under team-size, or dep-arity SAT is paraNP-complete whereas under all other mentioned parameters the problem is in FPT.
Finally, we introduce a variant of the satisfiability problem, asking for teams of a given size, and show for this problem an almost complete picture.
\keywords{Propositional Dependence Logic \and Parameterised Complexity \and Model Checking \and Satisfiability}

\end{abstract}
\section{Introduction}

The logics of dependence and independence are a recent innovation studying such central formalisms occurring in several areas of research: computer science, logic, statistics, game theory, linguistics, philosophy, biology, physics, and social choice theory \cite{grdel_et_al:DR:2016:5508}. 
Jouko Väänänen~\cite{DBLP:books/daglib/0030191}
 initiated this subfield of research in 2007, and nowadays, it is a vibrant area of study~\cite{DBLP:books/daglib/0037838}.
Its focus widened from initially first-order dependence logic further to modal logic~\cite{v08}, temporal logics~\cite{DBLP:conf/mfcs/KrebsMV018,DBLP:conf/time/KrebsMV15}, probabilistic logics~\cite{DBLP:conf/foiks/0001HKMV18}, logics for independence~\cite{DBLP:journals/logcom/KontinenMSV17}, inclusion logics~\cite{DBLP:journals/apal/Galliani12,hkmv19}, multi-team semantics~\cite{DBLP:journals/amai/DurandHKMV18}, and poly-team semantics~\cite{DBLP:conf/lfcs/HannulaKV18}.

In this paper, we study a sub-logic of the modal variant which is called propositional dependence logic (\PDL) \cite{DBLP:journals/apal/YangV16,DBLP:journals/tocl/HannulaKVV18}. 
The main concept also in this logic, the \emph{dependence atom} $\depa{P}{Q}$, intuitively states that the variables $p\in P$ functionally determine the values of the variables $q\in Q$.
As functional dependence only makes sense on sets of assignments, which Väänänen called \emph{teams}, team-semantics are the heart of the satisfaction relation $\models$ in this logic.
Formally, a team $T$ is a set of classical propositional assignments $t\colon \VAR \to\setdefinition{0,1}$, and $T\models\depa{P}{Q}$ if and only if for all $t,t'\in T$, we have that $t$ and $t'$ agree on the variables in $P$ implies $t$ and $t'$ agree on variables in $Q$.

The model checking question ($\MC$), given a team $T$ and a $\PDL$-formula $\varphi$, asks if $T\models\varphi$ is true.
The satisfiability problem ($\SAT$), given a $\PDL$-formula $\varphi$, asks for the existence of a team $T$ such that $T\models\varphi$.
It is known that $\MC$ as well as $\SAT$ are $\NP$-complete by Ebbing and Lohmann~\cite{DBLP:conf/sofsem/EbbingL12}, respectively, by Lohmann and Vollmer~\cite{DBLP:journals/sLogica/LohmannV13}.
These authors classify the complexity landscape of even operator-fragments of \PDL yielding a deep understanding of these problems from a classical complexity point of view.
For an overview of how other atoms (e.g., inclusion, or independence) influence the complexity of these problems consider the tables in the work Hella et~al.~\cite{hkmv19}. 

\begin{example}
We illustrate an example from relational databases providing understanding of team logics.
\begin{table}[t]
	\centering
	\begin{tabular}{c@{\; }c@{\; }c@{\; }c}\toprule
\texttt{Instructor} 	& 	\texttt{Room}		& 	\texttt{Time} 	& 	\texttt{Course} \\ \toprule
Antti		&	A.10		&	09.00	&	Logic 	\\
Antti		&	A.10		&	11.00	&	Statistics \\
Antti		&	B.20		&	15.00	&	Algebra \\
Jonni		&	C.30		&	10.00	&	LAB	 \\
Juha		&	C.30		&	10.00	&	LAB 	\\
Juha		&	A.10		&	13.00	&	Statistics	\\
\bottomrule
	\end{tabular}\qquad
	\begin{tabular}{c@{\; }c@{\; }c@{\; }c}\toprule
$i_1i_2$ 	& 	$r_1r_2$		& 	$t_1t_2t_3$ 	& 	$c_1c_2$ \\ \toprule
00		&	11 		&	110	&	11 	\\
00		&	11		&	111	&	00 \\
00		&	00		&	000	&	01 \\
01		&	01		&	001	&	10	 \\
10		&	01		&	001	&	10 	\\
10		&	11		&	010	&	00	\\
\bottomrule
	\end{tabular}	
	\caption{(Left) An example database with $4$ attributes and universe size $15$.\newline 
	(Right) An encoding with $\lceil\log_2(3)\rceil+\lceil\log_2(3)\rceil+\lceil\log_2(5)\rceil+\lceil\log_2(4)\rceil$ many propositional variables.}\label{database:example}
\end{table}	
Table~\ref{database:example} depicts a database which can be expressed in $\PDL$ via binary encoding of the possible entries for the attributes.
The set of rows then corresponds to a team $T$. 
The database satisfies two functional dependencies: 
\[
\depa{\{\texttt{Room}, 	\texttt{Time}\}}{\{\texttt{Course}\}}\text{ and }\depa{\{\texttt{Instructor}, 	\texttt{Time}\}}{\{\texttt{Room}, \texttt{Course}\}}.
\]
Whereas, it does not satisfy $\depa{\{\texttt{Room}, \texttt{Time}\}}{\{\texttt{Instructor}\}}$ as witnessed by the tuples (Juha, C.30, 10, LAB) and (Jonni, C.30, 10, LAB). 
Formally, we have that 
\[
	T\models \depa{\{\texttt{Room}, 	\texttt{Time}\}}{\{\texttt{Course}\}}\land\depa{\{\texttt{Instructor}, \texttt{Time}\}}{\{\texttt{Room}, \texttt{Course}\}},
\]
but
\[
	T\not\models\depa{\{\texttt{Room}, \texttt{Time}\}}{\{\texttt{Instructor}\}}.
\]
Notice that in propositional logic, we cannot express a table of so many values.
As a result, we need to binary encode the values of each column separately.
This might cause a logarithmic blow-up (by binary encoding the universe values for each column) in the parameter values, for example, it influences the number of variables.
Furthermore, one also has to rewrite variables in the occurring formulas accordingly.
For instance, as in Table~\ref{database:example}, for $\depa{\{\texttt{Room}, \texttt{Time}\}}{\{\texttt{Instructor}\}}$ this would yield the formula $\depa{\{r_1, r_2,t_1,t_2,t_3\}}{\{i_1,i_2\}}$.
The parameters discussed in this paper correspond to the already encoded values. 
This means that there is no need in considering this blow-up as in this example.
\end{example}

Often, when a problem is shown to be intrinsic hard, a possible way to further unravel the true reasons for the intractability is the framework of parameterised complexity theory~\cite{DBLP:series/txcs/DowneyF13}.
Here, one aims for a more fine-grained complexity analysis involving the study of parameterisations and how they pin causes for intractability substantially.
One distinguishes two runtimes of a different quality: $f(k)\cdot p(|x|)$ versus $p(|x|)^{f(k)}$, where $f$ is an arbitrary computable function, $p$ is a polynomial, $|x|$ the input length and $k$ the value of the parameter.
Clearly, both runtimes are polynomial in $x$ for each fixed $k$ but the first one is much better as the polynomial degree is independent of the parameter's value.
Problems that can be solved with algorithms running in a time of the first kind are said to be fixed-parameter tractable (or $\FPT)$.
Whereas, problems of category two are in the complexity class $\complClFont{XP}$.
It is known that $\FPT\subsetneq\XP$~\cite{DBLP:series/txtcs/FlumG06}.
Whenever runtimes of the form $f(k)\cdot p(|x|)$ are considered with respect to nondeterministic machines, one studies the complexity class $\para\NP\supseteq\FPT$.
In between these two classes a presumably infinite $\complClFont{W}$-hierarchy is contained: $\FPT\subseteq\W1\subseteq\W2\subseteq\cdots\subseteq\para\NP$.
It is unknown whether any of these inclusions is strict. 
Showing $\W1$-hardness of a problem intuitively corresponds to being intractable in the parameterised world.

The area of research of parameterised problems is tremendously growing and often provides new insights into the inherent difficulty of the studied problems \cite{DBLP:books/sp/CyganFKLMPPS15}.
However, the area of dependence logic is rather blank with respect to this direction of research, only Meier and Reinbold~\cite{DBLP:conf/foiks/MeierR18} investigated the (parameterised) enumeration complexity of a fragment of $\PDL$ recently.
As a subject of this research, we want to initiate and to further push a study of the parameterised complexity of problems in these logics.
\paragraph{Applications.} 
The teams in the team semantic bear a close resemblance with the relations studied in relational database theory. 
Moreover, dependence atoms are analogous to functional dependencies in the context of database systems. 
The $\MC$ problem for dependence logic, for example, is equivalent to determining whether a relation in the database satisfies a functional dependency. 
The teams of $\pdl$ also relate to the information states of inquisitive logic \cite{DBLP:conf/tbillc/CiardelliGR11}; a semantic framework for the study of the notion of meaning and information exchange among agents.

\begin{table}[t]
	\centering
	\begin{tabular}{ccccc} \toprule
	Parameter 	& $\MC$ 	  &  $\SAT$     & $\mSAT$   \\\midrule 
	\formulatw 	 
	& 	$\para\NP^{\ref{thm: formulatw}}$
	&  $\FPT^{\ref{sat:formulatw}}$ 
	& ? \\ 
	\formulateamtw 	 
	& $\FPT^{\ref{mc-formulateamtw}}  $	
	& see above
	& see above \\ 
	\teamsize  	 	
	& $\FPT^{\ref{mc:teamsize}}$	
	& $\para\NP^{\ref{thm:sat-teamsize-arity}}$	 
	& $\para\NP^{\ref{m-sat:arity}}$\\ 
	\formula 	 
	& $\FPT^{\ref{mc:remaining-cases}}$	 
	&  $\FPT^{\ref{sat:remaining-cases}}$ 	
	& $\FPT^{\ref{m-sat:all-cases}}$		\\
	\formuladepth 	 	
	& $\FPT^{\ref{mc:remaining-cases}}$	
	&  $\FPT^{\ref{sat:remaining-cases}}$
	& $\FPT^{\ref{m-sat:all-cases}}$		\\ 		
	\variables   	& $\FPT^{\ref{mc:remaining-cases}}$ 
	& $\FPT^{\ref{sat:remaining-cases}}$ 	
	& $\FPT^{\ref{m-sat:all-cases}}$		\\ 
	\splits  	 	
	& $\para\NP^{\ref{mc:splits}}$
	& $\FPT^{\ref{sat:splits}}$
	& ?		\\ 
	\arity
	& $\para\NP^{\ref{mc:arity}}$
	& $\para\NP^{\ref{thm:sat-teamsize-arity}}$
	& $\para\NP^{\ref{m-sat:arity}}$\\
	\bottomrule\\
	\end{tabular}
	\caption{Complexity classification overview showing the results of the paper with pointers to the theorems. All results are completeness results.  The question mark symbol means that the precise complexity is unknown.}\label{tbl:overview} 
\end{table}

\paragraph{Contributions.}
We study a wealth of parameters, also relevant from the perspective of database theory.
Specifically, the parameter \teamsize corresponds to the number of entries in the database and \variables is the number of attributes.
The parameter \formulatw denotes how much interleaving is present among the attributes in the query and \arity bounds the size of functional dependencies in the query.
Furthermore, the parameter $\formulateamtw$ bounds the interleaving between a query and the database, $\formula$ limits the size of the query, $\formuladepth$ restricts the nesting depth of the query, and $\splits$ controls the unions in relational algebra queries.
With respect to all parameters, we study $\MC$ and $\SAT$.
Furthermore, we introduce a satisfiability variant $\mSAT$, which has an additional unary input $m\in\N$, and asks for a satisfying team of size exactly $m$.

In Table~\ref{tbl:overview}, we give an overview of our results. 
In this article, we prove dichotomies for $\MC$ and $\SAT$: depending on the parameter the problem is either fixed-parameter tractable or $\para\NP$-complete.
Only the satisfiability variant under the parameters $\formulatw$ and $\splits$ resist a complete classification and are left for further research.

\paragraph*{Related work.}
The notion of treewidth is due to Robertson and Seymour~\cite{DBLP:journals/jct/RobertsonS86}. 
The study of the complexity of bounded treewidth query evaluation is a vibrant area of research \cite{DBLP:conf/stoc/GroheSS01,DBLP:journals/toct/ChenM15,DBLP:journals/mst/0001M15,DBLP:conf/icdt/ChenM15,DBLP:conf/pods/ChenM16,DBLP:conf/lics/ChenM17}. 
As stated earlier, the formulas of dependence logic correspond to the functional dependencies in the database context.
Bläsius~et~al.~\cite{blsius_et_al:LIPIcs:2017:6920} study the parameterised complexity of dependency detection. 
The problem is defined as, given a database $T$ and a positive integer $k$ whether there is a non-trivial functional dependency of size ($\arity$ in our notion) at most $k$ that is satisfied by $T$. 
These authors prove that this problem is $\W2$-complete.

\paragraph*{Organisation of the article.}
At first, we introduce some required notions and definitions in (parameterised) complexity theory, dependence logic, and propositional logic. 
Then we study the parameterised complexity of the model checking problem.
We proceed with the satisfiability problem and study a variant of it.
Finally, we conclude and discuss open questions.
For results marked with a ($\star$) their proof can be found in the appendix. 

\section{Preliminaries}
In this paper, we assume familiarity with standard notions in complexity theory~\cite{DBLP:books/daglib/0092426} such as the classes $\NP$ and $\Ptime$.

\subsection{Parameterised Complexity}
We will recapitulate some relevant notion of parameterised complexity theory, now. 
For a broader introduction consider the textbook of Downey and Fellows~\cite{DBLP:series/txcs/DowneyF13}, or Flum and Grohe~\cite{DBLP:series/txtcs/FlumG06}.
A parameterised problem (PP) $\Pi\subseteq\Sigma^*\times \mathbb{N} $ consists of tuples $(x,k)$, where $x$ is called the \emph{instance} and $k$ the \emph{(value of the) parameter}.
\begin{definition}[Fixed-parameter tractable and $\para\NP$]
	Let $\Pi$ be a PP over $\Sigma^*\times\mathbb{N}$.
	We say that $\Pi$ is \emph{fixed-parameter tractable} (or is in the class $\FPT$) if there exists a deterministic algorithm $\mathcal{A}$ deciding $\Pi$ in time $f(k)\cdot |x|^{O(1)}$ for every input $(x,k)\in\Sigma^*$, where $f$ is a computable function.
	If $\mathcal{A}$ is a nondeterministic algorithm instead, then $\Pi$ belongs to the class $\para\NP$.
\end{definition}
Let $P$ be a PP over $\Sigma^*\times\N$.
Then the \emph{$\ell$-slice of $P$}, for $\ell\ge0$, is the set $P_\ell\dfn\{\,x\mid (x,\ell)\in P\,\}$. 
It is customary to use the notation $O^\star(f(k))$ to denote the runtime dependence only on the parameter and to ignore the polynomial factor in the input. 
We will use the following result from parameterised complexity theory to prove \para\NP-hardness results. 
\begin{proposition}[{\cite[Theorem~2.14]{DBLP:series/txtcs/FlumG06}}]\label{slice-NP-result}
	Let $P$ be a PP.
	If there exists an $\ell\ge0$ such that $P_\ell$ is $\NP$-complete, then $P$ is $\para\NP\complete$.
\end{proposition}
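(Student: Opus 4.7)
The plan is to establish $\para\NP$-completeness of $P$ in two steps: $\para\NP$-hardness by an fpt-reduction from a canonical $\para\NP$-hard problem that funnels through the $\NP$-hard slice $P_\ell$, and $\para\NP$-membership by combining the $\NP$ upper bound on $P_\ell$ with the slicewise characterisation of $\para\NP$.

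For hardness, I would reduce from $p$-$\mathsf{SAT}$, the classical propositional satisfiability problem equipped with the trivial parameterisation $\kappa(\varphi)=0$. This variant is $\para\NP$-hard because every $\NP$ problem reduces to $\mathsf{SAT}$ in polynomial time, and a polynomial-time many-one reduction whose output parameter is a constant is automatically an fpt-reduction. By $\NP$-hardness of $P_\ell$ there is a polynomial-time computable $r$ with $\varphi\in\mathsf{SAT}\iff r(\varphi)\in P_\ell$; then the composed map $\varphi\mapsto(r(\varphi),\ell)$ is the desired fpt-reduction. Correctness is immediate from the definition of the $\ell$-slice, the output parameter is the fixed constant $\ell$ (a computable function of the constant input parameter), and the construction runs in polynomial time in $|\varphi|$.

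For membership, I would rely on the fact that in every intended application $P$ is the natural parameterisation of an $\NP$-problem, so a generic guess-and-verify nondeterministic algorithm witnesses $P\in\para\NP$; combined with the hardness part, $\para\NP$-completeness follows. The only real obstacle is the routine bookkeeping needed to check that the reduction satisfies the formal conditions of an fpt-reduction, namely that the output parameter depends computably on the input parameter and that the running time meets the fpt bound---both are trivialised by the fact that the input parameter is a fixed constant.
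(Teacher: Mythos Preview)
The paper does not provide its own proof of this proposition; it is simply quoted from Flum and Grohe's textbook as a black-box tool. So there is no ``paper's proof'' to compare against, only the original reference.

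Your hardness argument is exactly the standard one and is correct: pulling a polynomial-time reduction from $\mathsf{SAT}$ to the $\NP$-hard slice $P_\ell$ and tagging the output with the constant parameter $\ell$ yields an fpt-reduction from a trivially-parameterised $\para\NP$-hard problem.

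Your membership discussion, however, identifies a genuine gap in the proposition \emph{as stated in the paper}. The hypothesis that a single slice $P_\ell$ is $\NP$-complete says nothing about the other slices and therefore cannot by itself place $P$ in $\para\NP$; for instance, a problem whose $0$-slice is $\mathsf{SAT}$ and whose $1$-slice is undecidable satisfies the hypothesis but is not in $\para\NP$. The Flum--Grohe result in its original form concerns $\para\NP$-\emph{hardness}; the paper's ``completeness'' phrasing silently relies on the fact that in every application the underlying unparameterised problem is already in $\NP$ (model checking and satisfiability for $\PDL$ are $\NP$-complete). Your hedge ``in every intended application $P$ is the natural parameterisation of an $\NP$-problem'' is the right diagnosis, but it would be cleaner to add $P\in\para\NP$ as an explicit hypothesis rather than leave it as a side remark.
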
 
\noindent Moreover, we will use the following folklore result to get several upper bounds.
\begin{proposition}\label{parameter-bound}
	Let $Q$ be a problem such that $(Q, k)$ is in $\FPT$ and let $\ell$ be another parameter such that $k\leq f(\ell)$ for some computable function $f$, then $(Q, \ell)$ is also in \FPT. 
\end{proposition}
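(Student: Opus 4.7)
The plan is to lift the $\FPT$ algorithm for parameter $k$ directly to one for parameter $\ell$ via the given bound. Let $\mathcal{A}$ be a deterministic algorithm witnessing $(Q,k)\in\FPT$, so that on any instance $x$ it runs in time $h(k)\cdot|x|^{c}$ for some computable function $h$ and a constant $c$. Without loss of generality I will assume $h$ to be monotonically non-decreasing; if it is not, replace $h(n)$ by $h^\star(n)\dfn\max_{m\le n} h(m)$, which is still computable since $h$ is, and which only increases the time bound.

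Given an instance of $Q$ together with its $\ell$-parameter value, I would simply run $\mathcal{A}$ on $x$ (computing $k$ from $x$ if $k$ is determined by the instance, or reading it from the input tuple otherwise). Combining the hypothesis $k\le f(\ell)$ with the monotonicity of $h$ yields the runtime bound
\[
    h(k)\cdot |x|^{c} \;\le\; h\!\bigl(f(\ell)\bigr)\cdot |x|^{c}.
\]
Since $h\circ f$ is a computable function of $\ell$ alone, this is exactly an $\FPT$ bound with respect to $\ell$, establishing $(Q,\ell)\in\FPT$.

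The only delicate point—which is not really an obstacle—is making the monotonicity assumption on $h$ explicit so that the inequality $h(k)\le h(f(\ell))$ is legitimately inferred from $k\le f(\ell)$. Everything else reduces to a single substitution, which is why the statement is routinely invoked as folklore and will be used throughout the paper to transfer upper bounds between parameterisations.
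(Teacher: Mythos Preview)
Your argument is correct; the paper states this proposition as a folklore result without proof, so there is nothing to compare against. The monotonicity remark you make is exactly the right subtlety to address, and with it the substitution goes through as written.
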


\subsection{Propositional Dependence Logic}
Let $\VAR$ be a countably infinite set of variables.
The syntax of propositional dependence logic (\PDL) is defined via the following EBNF:
$$
\varphi\ddfn \top \mid 
			 \bot \mid
			 x \mid
			 \lnot x\mid
			 \varphi\lor\varphi\mid
			 \varphi\land\varphi\mid
			 \depa{X}{Y}\mid
			 \lnot \depa{X}{Y},
$$
where $\top$ is \emph{verum}, $\bot$ is \emph{falsum}, $x\in\VAR$ is a variable, $X,Y\subset\VAR$ are finite sets of variables, $\depa{\cdot}{\cdot}$ is called the \emph{dependence atom}, and the disjunction $\lor$ is also called \emph{split-junction}.
Observe that we only consider atomic negation.
We let $\pl$ be defined as the $\PDL$-formulas without $\depa{\cdot}{\cdot}$.
Finally, the set $X$ in $\depa{X}{Y}$ can be empty, giving rise to formulas of the form $\depa{}{Y}$.
To simplify the notation, when either set in the arguments of $\depa{X}{Y}$ is singleton then we write, for example, $\depa{x}{y}$ instead of $\depa{\{x\}}{\{y\}}$.
\begin{definition}[Team semantics]
	Let $\varphi,\psi$ be $\PDL$-formulas and $P,Q\subseteq\VAR$ be two finite sets of variables.
	A \emph{team} $T$ is a set of assignments $t\colon\VAR\to\setdefinition{0,1}$.
	Furthermore, we define the satisfaction relation $\models$ as follows, where $T\models \top$ is always true, $T\models \bot$ is never true, and $T\models \neg \depa{P}{Q}$ iff $T=\emptyset$:
	\begin{alignat*}{3}
		&T\models x && \text{ iff } \quad&& \forall t\in T: t(x)=1 \\
		&T\models \lnot x && \text{ iff } && \forall t\in T: t(x)=0 \\
		&T\models \varphi\land\psi && \text{ iff } && T\models\varphi\text{ and }T\models\psi \\
		&T\models \varphi\lor\psi && \text{ iff } && \exists T_1 \exists T_2 (T_1\cup T_2=T): T_1\models\varphi\text{ and }T_2\models\psi \\
		&T\models \depa{P}{Q}\quad && \text{ iff } && \forall t,t'\in T: \bigwedge_{p\in P}t(p)=t'(p)\text{ implies }\bigwedge_{q\in Q}t(q)=t'(q)
	\end{alignat*}
\end{definition}

Observe that for formulas of the form $\depa{}{Q}$ the team has to be constant with respect to $Q$.
That is why such atoms are called \emph{constancy atoms}. 
Note that in literature there exist two semantics for the split-junction operator: \emph{lax} and \emph{strict} semantics (e.g., Hella et~al.~\cite{hkmv19}).
Strict semantics requires the ``splitting of the team'' to be a partition whereas lax semantics allow an ``overlapping'' of the team.
We use lax semantics here.
Notice that the computational complexity for $\SAT$ and $\MC$ in $\PDL$ are the same irrespective of the considered semantics.
Furthermore, our proofs work for both semantics. 
Also further note that allowing an unrestricted negation operator dramatically increases the complexity of $\SAT$ in this logic to $\complClFont{ATIME\text-ALT(exp, poly)}$ (alternating exponential time with polynomially many alternations) as shown by Hannula~et~al.~\cite{DBLP:journals/tocl/HannulaKVV18}.
That is one reason why we stick to atomic negation.

In the following, we define three well-known formula properties which are relevant to results in the paper.
A formula $\phi$ is \emph{flat} if, given any team $T$, we have that $T\models \phi \iff \setdefinition{s}\models \phi$ for every $s\in T$.
A logic $\mathcal L$ is \emph{downwards closed} if for every $\mathcal L$-formula $\phi$ and team $T$, if $T\models \phi$ then for every $P\subseteq T$ we have that $P\models \phi$.
A formula $\phi$ is \emph{2-coherent} if for every team $T$, we have that $T\models \phi \iff \setdefinition{s_i,s_j}\models \phi$ for every $s_i,s_j\in T$.
The classical \pl-formulas are flat. 
This also implies that for \pl-formulas, the truth value is evaluated under each assignment individually, consequently, the semantics is the usual Tarski semantic.
Moreover, \pdl is downwards closed and every dependence atom is 2-coherent. 

\subsection{Representation of Inputs as Graphs}
As we will consider specific structural parameters, we need to agree on a representation of formulas, respectively, teams.
Classically, propositional formulas were represented via different kinds of graphs (e.g., Gaifman graph, primal graph) \cite{DBLP:series/faia/SamerS09}.
However, in this setting usually CNF-formulas are considered. 
Coping with this restriction, Lück~et~al.\ \cite{DBLP:journals/tocl/LuckMS17} defined syntax circuits for temporal logic formulas that also allow arbitrary formulas.
In our setting, we continue this direction and define the syntax (or formula) circuit with respect to a $\pdl$-formula.

An important observation regarding the graph representation for the $\pdl$-formulas is due to Grädel \cite{DBLP:journals/tcs/Gradel13}.
In the usual setting for logics with team semantics, we take the syntax tree and not the associated circuit, that is, we distinguish between different occurrences of the same subformula. 
The reason for this choice is that a formula $\phi\lor \phi$ is not equivalent to $\phi$, and in its evaluation, different teams are entitled to the two occurrences of $\phi$ in the formula.
Consequently, the well-formed formulas of $\pdl$ can be seen as binary trees with leaves as atomic subformulas (variables and dependence atoms).
\begin{example}
The team $\{00,01,10,11\}$ satisfies $\depa{x}{y}\lor \depa{x}{y} $, even though it does not satisfy $\depa{x}{y}$.	
\end{example}
Notice that according to the graph representation of formulas as trees, the treewidth (Def. \ref{treewidth}) of a \pdl-formula is already $1$. 
As a consequence, the $1$-slice of each problem is \NP-hard and both problems ($\MC$ and $\SAT$) are \para\NP-complete when parameterised by the treewidth of the syntax tree of a \pdl-formula.
For this reason we consider the syntax circuit rather than the syntax tree as a graph structure.

Given an instance $\langle T, \Phi \rangle$ of the model checking problem, where $\Phi$ is a $\pdl$-formula with propositional variables $\setdefinition{x_1, \ldots, x_n}\subseteq\VAR$ and $T=\setdefinition{s_1, \ldots s_m}$ is a team of assignments $s_i\colon\VAR\to\setdefinition{0,1}$. 
Then we consider the graph-structure $\strucA_{T,\Phi}$ with vocabulary $\tau_\problemind$ and represent the formula by its syntax circuit.
Henceforth, we write $\strucA$ instead of $\strucA_{T,\Phi}$ when it is clear that our input instance is ${T,\Phi}$ and define the vocabulary as
\begin{multline*}
\tau_{\problemind} \dfn \{\, \SF{}^1, \succcurlyeq^2, r, \var{}^1, \NEG^2, \CONJ^3, \DISJ^3, \DEP^3,\\ \inteam{}^1, \istrue{}^2,\isfalse{}^2, c_1, \ldots,c_m\,\},
\end{multline*}
where the superscripts denote the arity of each relation.
The set of vertices $\univA $ of the graph is $\SubForm{\Phi} \cup \setdefinition{c_1^\strucA,\ldots, c_m^\strucA}$, where $\SubForm{\Phi}$ denotes the set of subformulas of $\Phi$. 
\begin{itemize}
	\item \SF{} and \var{} are unary relations representing `is a subformula of $\Phi$' and `is a variable in $\Phi$' respectively.
	\item $\succcurlyeq$ is a binary relation such that $\phi \succcurlyeq^\strucA \psi$ iff $\psi$ is the immediate subformula of $\phi$ and $r$ is a constant symbol representing $\Phi$.
	\item The set $\setdefinition{c_1,\ldots,c_m}$ encodes the team $T$, where each $c_i$ is interpreted as $ c_i^{\strucA} \in {\strucA}$ and each $c_i$ corresponds to an assignment $s_i \in T$ for $i \leq m$. 
	\item $\inteam{c}$ is true if and only if $c\in \setdefinition{c_1,\ldots,c_m}$.
	\item $\istrue{}$ and $\isfalse{}$ relate variables with the team elements. $\istrue{c, x}$ (resp., $\isfalse{c, x}$) is true if and only if $x$ is mapped $1$ (resp., $0$) by the assignment interpreted by $c$. 
\end{itemize}

The remaining relations interpret how subformulas are related to each other.
\begin{definition}[Gaifman graph]
	Given a team $T$ and a $\pdl$-formula $\Phi$, the \emph{Gaifman graph} $G_{T,\Phi}=(\univA,E)$ of the $\tau_{T,\Phi}$-structure $\mathcal A$ is defined as 
	$$E\dfn\setdefinition{\setdefinition{u,v}\mid u,v\in\univA, \text{$u$ and $v$ share a tuple in a relation in }\calA}.$$
\end{definition}

%

\begin{definition}[Treewidth]\label{treewidth}
The \emph{tree-decomposition} of a given graph $G=(V,E)$ is a tree $T=(B,E_T)$, where the vertex set $B\subseteq\mathcal P(V)$ is called \emph{bags} and $E_T$ is the edge relation such that the following is true: 
(i) $\bigcup_{b\in B}=V$ 
(ii) for every $\setdefinition{u,v}\in E$ there is a bag $b\in B$ with $u,v\in b$, and 
(iii) for all $v\in V$ the restriction of $T$ to $v$ (the subtree with all bags containing $v$) is connected.
The \emph{width} of a given tree-decomposition $T=(B,E_T)$ is the size of the largest bag minus one: $\max_{b\in B}|b|-1$.
The \emph{treewidth} of a given graph $G$ is the minimum over all widths of tree-decompositions of $G$.
\end{definition}
Observe that if $G$ is a tree then the treewidth of $G$ is one.
Intuitively, one can say that treewidth accordingly is a measure of tree-likeliness of a given graph.
\begin{example}
	Figure~\ref{fig:example-tw} represents a graph (in middle) with a tree-decomposition (on the right).
	Since the largest bag is of size $3$, the graph has a treewidth of $2$.
%
%
%
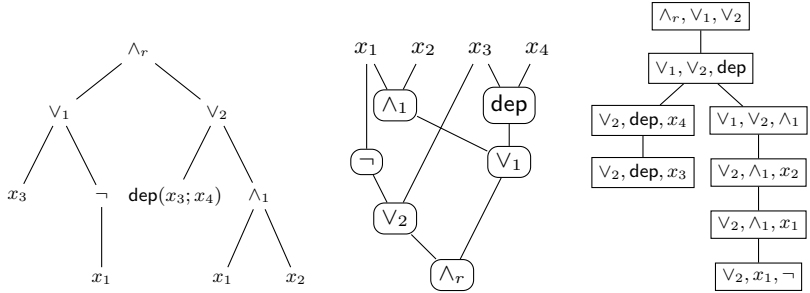
\begin{figure}[t]
\centering
\begin{tikzpicture}[scale=0.85, 
	level 1/.style={sibling distance=7.5em, level distance= 3 em},
	level 2/.style={sibling distance=4em, level distance= 4 em}, 
	level 3/.style={sibling distance=3.5em, level distance= 4 em}, 
	every node/.style={scale=0.8},
	edge from parent/.style={thin,-,black, draw},
	 leaf/.style = {draw, circle}]
\node  {\textcolor{black}{$\wedge_r$}}
		child {node {$\lor_1$}
			child {node {$x_3$}}
			child {node {$\neg$}
				child {node {$x_1$}}}}	
		child {node {$\lor_2$}	
				child{node {$\depa{x_3}{x_4}$ }}
				child{node {$\land_1$}
					child {node {$x_1$}}
					child {node {$x_2$}}}
				};
\end{tikzpicture} 
\quad
\begin{tikzpicture}[gate/.style={inner sep=1mm,draw,rounded corners,rectangle},scale=.75]
		\node (x1) at (0,0) {$x_1$};
		\node (x2) at (1,0) {$x_2$};
		\node (x3) at (2,0) {$x_3$};
		\node (x4) at (3,0) {$x_4$};
		
		\node[gate] (and1) at (.5,-1) {$\land_1$};
		\node[gate] (dep) at (2.5,-1) {$\mathsf{dep}$};

		\node[gate] (not) at (0,-2) {$\lnot$};
		\node[gate] (or1) at (2.5,-2) {$\lor_1$};

		\node[gate] (or2) at (.5,-3) {$\lor_2$};
		
		\node[gate] (andr) at (1.5,-4) {$\land_r$};
		
		\foreach \f/\t in {x1/and1,x2/and1,x3/dep,x4/dep,and1/or1,dep/or1,x1/not,not/or2,x3/or2,or2/andr,or1/andr}{
			\draw[-] (\f) -- (\t);
		}
	\end{tikzpicture}
\quad
\begin{tikzpicture}[scale=0.85, level distance= 2.5 em, sibling distance= 5.5 em,
	every node/.style={draw, scale=0.8},
	edge from parent/.style={thin,-,black, draw},
	 leaf/.style = {draw, circle}]
\node  {\textcolor{black}{$\wedge_r,\lor_1,\lor_2$}}
	child {node {$\lor_1,\lor_2,\mathsf{dep}$} 
		child {node {$\lor_2,\mathsf{dep},x_4$}
			child {node {$\lor_2,\mathsf{dep},x_3$}
				}}		
		child {node {$\lor_1,\lor_2,\land_1$}	
				child{node {$\lor_2,\land_1,x_2$}
					child{node {$\lor_2,\land_1,x_1 $}
						child{node {$\lor_2,x_1,\neg $}}}}}};
\end{tikzpicture}

\caption{An example syntax tree (left) with the corresponding circuit graph (middle) and a tree decomposition (right) for $(x_3\lor \neg x_1) \land [(\depa{x_3}{x_4}) \lor (x_1\land x_2)]$.}\label{fig:example-tw}
\end{figure}

\end{example}
\subsection{Considered Parameterisations.}
We consider eight different parameters for all three problems ($\MC$, $\SAT$ and $\mSAT$). 
These include $\formulatw$, $\formulateamtw$, $\teamsize$, $\formula$,\linebreak $\variables$, $\formuladepth$, $\splits$ and $\arity$.
All these parameters arise naturally in the problems we study. 
$\splits$ denotes the number of times a split junction $(\lor)$ appears in a formula and $\variables$ denotes the total number of propositional variables.
The parameter $\formuladepth$ is the depth of the syntax tree of $\Phi$, that is, the length of the longest path from root to any leaf in the tree. 
Arity of a dependence atom $\depa{P}{Q} $ is the length of $P$ and $\arity$ is the maximum arity of a dependence atom in the input formula.

Regarding treewidth, notice first that for the $\MC$ problem, we can also include the assignment-variable relation in the graph representation.
This yields two treewidth notions $\formulatw$ and $\formulateamtw$, the name emphasises whether the team is also part of the graph.
$\formulatw$ is the treewidth of the syntax circuit of input formula alone whereas $\formulateamtw$ comes from the syntax circuit when team elements are also part of the representation .
Clearly, $\formulateamtw$ is only relevant for the $\MC$ problem.

The following lemma proves relationships between several of the aforementioned parameters.
The notation $\kappa(\problemind)$ stands for the parameter $\kappa$ of the input instance $(\problemind)$.
\begin{figure}[t]
\centering
	\begin{tikzpicture}[auto, sibling distance= 2cm, level distance = 1.5 cm, scale=.5,  node/.style = {}, node1/.style = {}, edge/.style = {},x=1.25cm ] 

   \node[node] at (-4,2) (team) {$\teamsize$};
   \node[node] at (0,4) (var) {$\variables$};
   \node[node] at (4,4) (form) {$\formula$};
   \node[node] at (4,0) (depth) {$\formuladepth$};
   \node[node] at (0,0) (tw) {$\formulateamtw$}; 

 	\draw[stealth'-] (team) to node {${L\ref{para:relations}} $} (var) ; 
 	\draw[stealth'-] (var) to node {${L\ref{para:relations}} $} (form) ; 
 	\draw[dashed, stealth'-] (team) to node {${L\ref{para:tw-teamsize}} $} (tw) ; 
 	\draw[dashed,stealth'-] (var) to node {${L\ref{para:tw-teamsize}}$} (tw) ; 
 	\draw[stealth'-] (form) to node {${L\ref{para:relations}} $} (depth) ; 
\end{tikzpicture}
\caption{The relationship among different parameters. The direction of arrow in $p \leftarrow q$ implies that bounding $q$ results in bounding $p$. The dotted line indicates that the parameter bounds either (minimum) of the given two. $Li$ means the Lemma $i$.}\label{fig:parameters}
\end{figure}
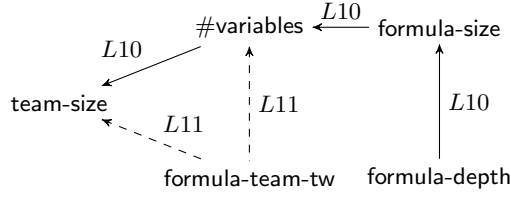 
\begin{lemma}\label{para:relations}
Given a team $T$ and a formula $\Phi$ then 
\begin{samepage}
\begin{enumerate}
	\item $\teamsize(\problemind) \leq 2^{\variables(\problemind)}$
	\item $\teamsize(\problemind ) \leq 2^{\formula(\problemind)}$
	\item $\formula(\problemind) \leq 2^{2\cdot \formuladepth(\problemind) }$
\end{enumerate}
\end{samepage}
\end{lemma}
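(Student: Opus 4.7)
My plan is to prove the three bounds in order, each by an elementary observation about how teams and formulas are measured.

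For part~1, the key observation is that an assignment $s \in T$ is, for all satisfaction-relevant purposes, determined by its restriction to the $\variables(\problemind)$ many propositional variables appearing in the input. Since $T$ is a set of pairwise distinct assignments (modulo this equivalence), it embeds into $\{0,1\}^{\variables(\problemind)}$, yielding $\teamsize(\problemind) \le 2^{\variables(\problemind)}$.

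For part~2, each propositional variable appearing in $\Phi$ contributes at least one symbol (hence at least one subformula) to $\Phi$, so $\variables(\problemind) \le \formula(\problemind)$; chaining with part~1 gives the desired inequality. For part~3, by the grammar of $\pdl$ every connective is unary or binary and every atomic subformula (variable, literal, constant, or dependence atom) forms a single leaf of the syntax tree, so the syntax tree of $\Phi$ is a binary tree of depth $d := \formuladepth(\problemind)$, which has at most $2^{d+1} - 1 \le 2^{2d}$ nodes. Since $\formula(\problemind)$ equals the number of subformulas, i.e.\ the number of nodes of the syntax tree, the claim follows.

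The only step requiring genuine care is part~1, since assignments are formally total functions on the infinite set $\VAR$ and two team elements could in principle differ only outside the relevant variables; I expect the main work to lie in fixing the convention (restriction to the finitely many relevant variables, or equivalently collapsing satisfaction-equivalent assignments in $T$) so that this potential ambiguity does not disturb the cardinality count. Parts~2 and~3 are then pure counting arguments and should be one-liners each.
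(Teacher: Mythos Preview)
Your proposal is correct and follows essentially the same approach as the paper: part~1 by counting assignments over the relevant variables, part~2 by chaining the inequality $\variables \le \formula$ with part~1, and part~3 by bounding the node count of a depth-$d$ binary syntax tree. Your explicit attention to the convention that team members are identified by their restrictions to $\VAR(\Phi)$ is a point the paper simply takes for granted, and your bound $2^{d+1}-1 \le 2^{2d}$ is a cosmetic variant of the paper's ``$\le 2^d$ leaves plus $\le 2^d$ internal nodes'' count.
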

\begin{proof}
	If a \pdl-formula $\Phi$ has $m$ variables then there are $2^m$ many assignments and the maximum size for a team is $2^m$. 
	As a result, we have $\teamsize \leq 2^{\variables}$.
	Furthermore, the number of variables in a \pdl-formula $\Phi$ is bounded by the $\formula$ and as a consequence, we have $2^{\variables} \leq 2^{\formula}$.
	This proves the second claim.

	If a formula $\Phi$ has $\formuladepth=d$ then there are $\leq 2^d$ leaves in the (binary) syntax tree of $\Phi$ and $\leq 2^d$ internal nodes.
	Then $\formula \leq 2^{2d}$ is true. 
\end{proof}
Now we prove the following non-trivial lemma stating that treewidth of the structure $\calA_\problemind$ bounds either the team size or the number of variables. 
This implies that bounding the treewidth of the structure also bounds either of the two parameters. 
Recall that we talk about the $\formulateamtw$ of the Gaifman graph underlying the structure $\calA_\problemind$ that encodes the $\MC$ question.
\begin{lemma}\label{para:tw-teamsize}
	Let $\langle T,\Phi\rangle$ be a given $\MC$ instance. 
	Then the following relationship between parameters is true,
	$$\formulateamtw(\problemind) \geq \min \setdefinition{\teamsize(\problemind), \variables(\problemind)}$$
\end{lemma}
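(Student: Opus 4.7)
The strategy is to locate a large complete bipartite subgraph inside the Gaifman graph $G_{T,\Phi}$ and then invoke the well-known tight bound on the treewidth of $K_{m,n}$.

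First I would unfold the definition of the Gaifman graph on the vocabulary $\tau_{\problemind}$. The crucial observation is that the binary relations $\istrue{}$ and $\isfalse{}$ are together \emph{total} on the product $\{c_1,\ldots,c_m\}\times\{x_1,\ldots,x_n\}$: every propositional assignment $s_i$ maps each variable $x_j$ either to $0$ or to $1$, so exactly one of $(c_i,x_j)\in\istrue{}^{\calA}$ or $(c_i,x_j)\in\isfalse{}^{\calA}$ holds. In either case $c_i$ and $x_j$ share a tuple in a relation of $\calA$, hence $\{c_i,x_j\}$ is an edge of $G_{T,\Phi}$. Setting $m:=\teamsize(\problemind)$ and $n:=\variables(\problemind)$, this exhibits the complete bipartite graph $K_{m,n}$ as a subgraph of $G_{T,\Phi}$ on the bipartition $\{c_1,\ldots,c_m\}\cup\{x_1,\ldots,x_n\}$.

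Next, I would invoke the subgraph-monotonicity of treewidth: for any subgraph $H$ of a graph $G$, restricting the bags of a tree-decomposition of $G$ to $V(H)$ yields a tree-decomposition of $H$ of no larger width, hence the treewidth of $H$ is at most that of $G$. Combining this with the standard graph-theoretic fact that the treewidth of $K_{m,n}$ equals $\min(m,n)$, one concludes
\[
\formulateamtw(\problemind) \;\geq\; \min(m,n) \;=\; \min\{\teamsize(\problemind),\, \variables(\problemind)\},
\]
which is the desired inequality.

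The only genuinely non-trivial ingredient is the lower bound on the treewidth of $K_{m,n}$; depending on what is considered standard, this can either be cited from the graph-theory literature or proved directly via a bramble/separator argument. Specifically, in any tree-decomposition of $K_{m,n}$ with $m\le n$, some bag must contain one full side of size $m$ of the biclique, for otherwise a bag of size less than $m$ would have to separate two vertices that are nonetheless joined through a common neighbour on the opposite side, contradicting the connectivity condition. Beyond this well-known fact, the rest of the lemma follows immediately from the definitions of the Gaifman graph and of $\tau_{\problemind}$.
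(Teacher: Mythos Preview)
Your argument is correct and considerably cleaner than the paper's. You isolate the essential feature---that the relations $\istrue{}$ and $\isfalse{}$ force a complete bipartite graph $K_{m,n}$ between the team constants $c_1,\ldots,c_m$ and the variable vertices $x_1,\ldots,x_n$---and then invoke the standard identity $\operatorname{tw}(K_{m,n})=\min(m,n)$ together with subgraph monotonicity of treewidth. The paper instead argues directly on an arbitrary tree decomposition of the full Gaifman graph: it tracks bags $B_{x_i,c_j}$ covering individual variable/assignment edges, brings in the $\succcurlyeq$-paths through the syntax circuit, and runs a case analysis to conclude that some bag must contain either all variables (together with a team constant) or all team constants. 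In effect the paper re-derives the biclique lower bound in situ, entangled with formula structure that your approach shows is not actually needed. Your route is more modular and makes the dependence on a known graph-theoretic fact explicit.

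One small remark on your sketch of the $K_{m,n}$ lower bound: the observation that ``some bag must contain one full side of size $m$'' by itself only yields width $\ge m-1$. The full strength $\operatorname{tw}(K_{m,n})=\min(m,n)$ needs one more vertex in that bag; equivalently, one can exhibit a $K_{m+1}$ minor by contracting $m-1$ edges of a matching in $K_{m,m}\subseteq K_{m,n}$. Since you explicitly allow citing the result from the literature, this does not affect the validity of your proof.
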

\begin{proof}
	We prove that if $\formulateamtw$ is smaller than the two then such a decomposition must have cycles and hence cannot be a tree  decomposition.
    The proof idea uses the fact that in the Gaifman graph $\calA_\problemind$, every team element is related to each variable. 
    As a consequence, in any tree decomposition, the assignment-variable relations `\istrue{}' and `\isfalse{}' cause some bag to have their size larger than either the team size or the number of variables (based on which of the two values is smaller).
    We consider individual bags corresponding to an edge in the Gaifman graph due to the relations from $\tau_\problemind$.
    Let $\setdefinition{x_1,\ldots,x_n}$ be the variables that also appear as leaves in the formula tree $\langle \SubForm{\Phi}, \succcurlyeq, \Phi \rangle$. 
	
	Consider a minimal tree decomposition $\langle B_T,\prec \rangle$ for the Gaifman graph of $\calA$.
	Denote by $B_{x_i,c_j}$ the bag that covers the edge between a variable $x_i$ and an assignment-element $c_j$, that is, either $\istrue{x_i,c_j}$ or $\isfalse{x_i,c_j}$ is true. 
	Moreover, denote by $B_{x_i, \alpha_r}$ the bag covering the edges between a variable $x_i$ and its immediate $\succcurlyeq$-predecessor $\alpha_r$. 
	Recall that in the formula part of the Gaifman graph, there is a path from each variable $x_i$ to the formula $\Phi$ due to $\succcurlyeq$. 
	This implies that there exists a minimal path between any pair of variables in the Gaifman graph, and this path passes through some subformula $\Psi$ of $\Phi$.
	Let $B_{x,\alpha_1}, B_{\alpha_1,\alpha_2}, \ldots$ $B_{\alpha_{n},\Psi}, \ldots B_{\Psi, \beta_n},\ldots B_{\beta_2,\beta_1},B_{\beta_1,y}$ be the sequence of bags that covers $\succcurlyeq^\calA$-edges between $x$ and $y$.
	Without loss of generality, we assume that all these bags are distinct.
	Now, for any pair  $x,y$ of variables, the bags $B_{x,c_i}$ and $B_{y,c_i}$ contain $c_i$ for each $i\leq m$ and as a consequence, we have either of the following two cases.
	\begin{itemize}
		\item The two bags are equal, that is $B_{x, c_i}=B_{y, c_i}$ and as a consequence, we have $|B_{x, c_i}| \geq 3$ because $B_{x,c_i}$ contains at least $x,y,c_i$.
		Moreover, if this is true (otherwise case two applies) for each pair of variables, then there is a single bag $B_{c_i}$ that contains all the variables and the element $c_i$. 
		This means the maximum bag size must be larger than the total number of variables, a contradiction.
		\item Every path between $B_{x,c_i}$ and $B_{y,c_i}$ contains $c_i$. 
%
		We know that if a $B_{x,\alpha_1}$-$B_{\beta_1,y}$-path between $x$ and $y$ due to $\succcurlyeq$ exist, then the bags $B_{x, c_i}$ and $B_{y, c_i}$ cannot be incident because this will produce a cycle, a contradiction again.
%
%
		Now, for two different assignment-elements $c_i, c_j$, consider the bags $B_{y, c_i},B_{y, c_j}$.
		If these two bags are incident then $B_{x, c_j},B_{y, c_j}$ cannot be incident and the path between $B_{x, c_j},B_{y, c_j}$ must contain $c_j$. 
		Notice that both $B_{y, c_i},B_{y, c_j}$ and $B_{x, c_j},B_{y, c_j}$ cannot be incident since this would, again, create a cycle.
		Consequently, the only possible case is that either $B_{y, c_i}$ and $B_{y, c_j}$ are not incident and every path between these bags contains $y$, or $B_{x, c_j}$ and $B_{y, c_j}$ are not incident and every path between these bags contains $c_j$.
		Also see Figure~\ref{fig:tree-decomposition} explaining this situation.
		\end{itemize}
Finally, since this is true for all the variables and all the elements $c_i$ with $i\leq m$ this proves that either there is a bag that contains all the variables, or there is one that contains all $c_i$'s.
The remaining case that there are cycles in the tree decomposition is not applicable. 
This proves the claim and completes the proof to the lemma.
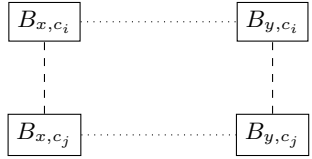
\begin{figure}[t]
\centering

\begin{tikzpicture}[auto, sibling distance= 2cm, level distance = 1.5 cm, scale=.75,  node/.style = {draw, shape=rectangle}, node1/.style = {}, edge/.style = {} ] 

   \node[node] at (-4,2) (31) {$B_{x,c_j}$};
    \node[node] at (0,2) (32) {$B_{y,c_j}$};

   \node[node] at (-4,4) (41) {$B_{x,c_i}$};
    \node[node] at (0,4) (42) {$B_{y,c_i}$};

 	\draw[dotted] (31) to node {$$} (32) ; 
 	\draw[dotted] (41) to node {$$} (42) ; 
 	\draw[dashed] (31) to node {$$} (41) ; 
 	\draw[dashed] (32) to node {$$} (42) ; 
\end{tikzpicture}
\caption{The rectangles represent bags corresponding to a variable-assignment relation. 
If the $c_i$-bags do not contain $c_j$-nodes, then
there can be only either dotted or dashed edges between the bags to avoid cycles.}\label{fig:tree-decomposition}
\end{figure}
\end{proof}
\noindent The following corollary is immediate due to previous lemma.
\begin{corollary}\label{cor:tw-teamsize}
	Let $\Phi\in\PDL$ and $T$ be a team. 
	Then $\formulateamtw(\problemind)$ bounds $\teamsize(\problemind)$. 
\end{corollary}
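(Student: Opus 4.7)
The plan is to combine Lemma~\ref{para:tw-teamsize} directly with Lemma~\ref{para:relations}(1) via a short case distinction. Writing $k \dfn \formulateamtw(\problemind)$, Lemma~\ref{para:tw-teamsize} gives
\[
k \;\geq\; \min\{\teamsize(\problemind),\;\variables(\problemind)\},
\]
so at least one of the two quantities on the right is bounded by $k$.

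First, I would handle the easy case: if $\teamsize(\problemind) \leq \variables(\problemind)$, then $\min$ equals $\teamsize(\problemind)$ and we immediately get $\teamsize(\problemind) \leq k$, so $\teamsize$ is bounded by the identity function of $\formulateamtw$. Otherwise, we are in the case $\variables(\problemind) \leq k$. Here I would invoke Lemma~\ref{para:relations}(1), which states $\teamsize(\problemind) \leq 2^{\variables(\problemind)}$, yielding $\teamsize(\problemind) \leq 2^{k}$.

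Combining both cases, $\teamsize(\problemind) \leq 2^{\formulateamtw(\problemind)}$, so the computable function $f(k) = 2^{k}$ witnesses that $\formulateamtw$ bounds $\teamsize$ in the sense required by Proposition~\ref{parameter-bound}. There is no real obstacle here since all the work has already been done in the preceding lemma; the only minor subtlety is noticing that the $\min$ in Lemma~\ref{para:tw-teamsize} must be dissolved via the exponential upper bound on $\teamsize$ in terms of $\variables$, rather than treating the two cases symmetrically.
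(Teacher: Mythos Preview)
Your proof is correct and matches the paper's own argument essentially verbatim: both combine Lemma~\ref{para:tw-teamsize} with Lemma~\ref{para:relations}(1) via a two-case distinction, concluding that $\teamsize \leq 2^{\formulateamtw}$. The only cosmetic difference is that the paper splits on whether $\formulateamtw \geq \variables$ while you split on whether $\teamsize \leq \variables$, which amounts to the same thing.
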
 
\begin{proof}
	If $\formulateamtw \geq \variables$ then bounding $\formulateamtw$ bounds $\variables$ which in turn bounds $\teamsize$ because $\teamsize \leq 2^{\variables}$.
	Otherwise we already have $\formulateamtw \geq \teamsize$ according to Lemma~\ref{para:tw-teamsize}.
\end{proof}
\section{Parameterised Complexity of Model Checking in PDL}
In this section, we study the $\MC$ question under various parameterisations. 
Table~\ref{tbl:overview} contains a complete list of the results. 

\begin{proposition}[{\cite[Thm. 3.2]{DBLP:conf/sofsem/EbbingL12}}]\label{prop:mc-np-complete}
	$\MC$ is $\NP$-complete.
\end{proposition}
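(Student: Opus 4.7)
The plan is to establish membership in $\NP$ and $\NP$-hardness separately.

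For the upper bound, one designs a nondeterministic polynomial-time algorithm that walks top-down through the syntax tree of $\Phi$, carrying the current team along each branch. At a split-junction $\varphi \lor \psi$ applied to a team $T$ the algorithm guesses two subteams $T_1, T_2 \subseteq T$ with $T_1 \cup T_2 = T$ and recurses on $\varphi$ with $T_1$ and on $\psi$ with $T_2$; at a conjunction both children receive the same team. At the leaves, the atomic tests $T \models x$, $T \models \neg x$, $T \models \depa{P}{Q}$, and $T \models \neg \depa{P}{Q}$ (the last of which holds iff $T = \emptyset$) are all decidable in polynomial time, the dependence test by a single sweep over the $O(|T|^2)$ pairs of assignments. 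Since the syntax tree has polynomially many nodes and each nondeterministic choice only writes down a subset of $T$, the whole computation stays within $\NP$.

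For the lower bound, I would reduce from $3$-$\SAT$, exploiting the fact that evaluating $\lor$ under team semantics amounts to existentially guessing a cover of the team. Given $\phi = C_1 \land \cdots \land C_n$ over variables $x_1, \ldots, x_m$ with $C_j = \ell_{j,1} \lor \ell_{j,2} \lor \ell_{j,3}$, the blueprint is to take a team that records, for each variable, both possible truth values together with small bookkeeping entries, and a $\PDL$-formula of the form $\bigwedge_{j=1}^{n}\bigl(\ell_{j,1} \lor \ell_{j,2} \lor \ell_{j,3}\bigr)$ conjoined with constancy atoms $\depa{}{x_i}$. Within each clause the split-junction existentially picks a satisfying literal, and the constancy atoms propagate those picks into a globally coherent assignment, so that $T \models \Phi$ holds exactly when $\phi$ is satisfiable.

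The main obstacle is the hardness direction: the delicate point is calibrating the team together with the dependence gadgets so that each split-junction indeed picks a single literal per clause while the constancy atoms glue these local picks into one global assignment, without the downward-closure of $\PDL$ or the flatness of classical literals trivialising the encoding. Once such a team and formula are fixed, correctness and polynomial size of the reduction are routine to verify.
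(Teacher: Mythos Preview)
Your $\NP$ membership argument is fine and matches the standard one.

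The hardness sketch, however, does not work as written. If the constancy atoms $\depa{}{x_i}$ are conjoined at the top level with the CNF-shaped formula, then $T\models\bigwedge_i\depa{}{x_i}$ forces the \emph{input} team $T$ to be constant on every $x_i$; but your team is supposed to ``record both possible truth values'' for each variable, so this conjunct fails outright regardless of the 3-CNF instance. More fundamentally, with the outermost operator a conjunction over clauses, every clause-disjunction is evaluated on the same full team $T$, and the splits chosen inside distinct clauses are completely independent of one another. There is no mechanism in $\PDL$ by which a top-level constancy atom can ``propagate'' choices made inside nested split-junctions, because it is evaluated on $T$ itself, not on the guessed pieces. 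So the obstacle you flag is not merely delicate calibration; the architecture itself cannot carry the reduction.

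The paper does not reprove the proposition (it is quoted from Ebbing and Lohmann), but it reproduces their reduction in the proof of Theorem~\ref{thm: formulatw}, and that reduction transposes your layout: the team has one assignment \emph{per clause}, and the formula is a single outermost split-junction
\[
\Psi \;=\; \bigvee_{j=1}^{n}\bigl(r_j\land\depa{}{p_j}\bigr)
\]
ranging over the \emph{variables}. Splitting the team then amounts to choosing, for each clause, a variable that witnesses it; the literal $r_j$ checks that $x_j$ actually occurs in that clause, and the constancy atom $\depa{}{p_j}$---now sitting \emph{inside} the $j$-th disjunct---forces all clauses sent to the $j$-th piece to agree on the sign of $x_j$, i.e.\ a consistent truth value. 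The structural point you are missing is that only a single outermost split-junction lets the dependence atoms act on the guessed subteams rather than on the fixed input team.
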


\begin{theorem}\label{thm: formulatw}
	$\MC$ parameterised by $\formulatw$ is $\para\NP$-complete.
\end{theorem}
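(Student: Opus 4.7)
Membership is immediate: $\MC\in\NP$ by Proposition~\ref{prop:mc-np-complete}, and every problem in $\NP$ trivially lies in $\para\NP$ (the non-deterministic polynomial-time algorithm simply ignores the parameter), so $\MC$ parameterised by $\formulatw$ is in $\para\NP$.

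For hardness, the plan is to apply the slice result (Proposition~\ref{slice-NP-result}): it suffices to exhibit a constant $\ell\in\mathbb{N}$ for which the $\ell$-slice of $\MC$ is already $\NP$-hard. Accordingly, I would design a polynomial-time reduction from an $\NP$-hard source problem to $\MC$ whose output formula $\Phi$ has syntax-circuit treewidth uniformly bounded by $\ell$. The guiding principle is to keep $\Phi$ structurally thin and push all combinatorial content of the instance into the team~$T$.

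A natural candidate shape is $\Phi_c \dfn \depa{X}{Y}\lor\cdots\lor\depa{X}{Y}$ with a fixed number $c$ (say $c=3$) of split-junctions over shared variables $X\cup Y$. The $c$ copies of the dependence atom remain distinct in the syntax circuit, since $\varphi\lor\varphi\not\equiv\varphi$ in $\pdl$; the resulting Gaifman graph has a $K_{c,|X|+|Y|}$-shaped core plus a bounded number of split-junction nodes, so its treewidth is bounded by a constant depending only on $c$ and \emph{independent} of $|X|+|Y|$. Semantically, under lax semantics $T\models\Phi_c$ iff $T$ can be covered by $c$ subteams each validating the functional dependence $X\to Y$, i.e., iff a suitable conflict structure induced by $T$ admits a $c$-colouring.

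The principal obstacle is the team construction: one must arrange $T$ so that this ``cover by $c$ copies of $\depa{X}{Y}$'' problem faithfully encodes an $\NP$-hard source problem (for instance $3$-SAT or graph $3$-colourability). Since the conflict graphs induced by a single dependence atom have restricted shape (disjoint unions of complete bipartite pieces after grouping rows by $X$-pattern and $Y$-pattern), extra flexibility will typically be gained by augmenting each disjunct of $\Phi_c$ with a small, constant-sized $\pl$-subformula, enriching the admissible partitions without enlarging the formula shape beyond a constant; thus $\formulatw(\Phi)$ stays bounded by some fixed $\ell$. Feeding the resulting $\NP$-hard $\ell$-slice into Proposition~\ref{slice-NP-result} yields $\para\NP$-hardness, and combined with membership we obtain the claimed $\para\NP$-completeness.
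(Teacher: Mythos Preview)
Your membership argument and the overall hardness strategy (invoke Proposition~\ref{slice-NP-result} by exhibiting a constant-treewidth slice that is already $\NP$-hard) match the paper exactly. The gap is that you never actually carry out a reduction: you propose a \emph{template} $\Phi_c=\depa{X}{Y}\lor\cdots\lor\depa{X}{Y}$, then explicitly flag the ``principal obstacle'' (building $T$ so that covering by $c$ functional-dependence subteams encodes an $\NP$-hard source) and leave it unresolved. ``Extra flexibility will typically be gained by augmenting each disjunct with a small, constant-sized $\pl$-subformula'' is a hope, not a construction; a proof must supply the concrete team, the augmented formula, and the correctness argument, none of which appears. There is also a secondary risk in your treewidth estimate: if the ternary relation $\DEP$ places triples $(\text{atom},x,y)$ with $x\in X$, $y\in Y$ into the structure, the Gaifman graph acquires a $K_{|X|,|Y|}$ among the variables and the width is no longer bounded by $c$ alone.

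The paper sidesteps all of this with a direct reduction from $3\text{-}\mathsf{SAT}$. Given a $3$-CNF over $x_1,\dots,x_n$ with clauses $C_1,\dots,C_m$, it introduces fresh propositional variables $p_1,\dots,p_n,r_1,\dots,r_n$, defines one assignment $s_i$ per clause $C_i$ recording which literals occur in $C_i$, and sets
\[
\Psi \;=\; \bigvee_{j=1}^{n}\bigl(r_j\land\depa{}{p_j}\bigr).
\]
Every leaf of $\Psi$ (each $r_j$ and each constancy atom $\depa{}{p_j}$) occurs exactly once, so the syntax \emph{circuit} of $\Psi$ coincides with its syntax tree and has treewidth~$1$. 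Hence the $1$-slice of $\MC$ is $\NP$-hard, and Proposition~\ref{slice-NP-result} finishes. Your $3$-colouring intuition is in fact close to how the paper later handles the parameter $\splits$ (Theorem~\ref{mc:splits}), but for $\formulatw$ the constancy-atom construction is both simpler and immediately delivers width~$1$ without any of the encoding subtleties your sketch leaves open.
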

\begin{proof}
	The upper bounds follows from Proposition~\ref{prop:mc-np-complete}.
	For lower bound we prove that $1$-slice of the problem is \NP-hard by reducing from 3SAT.
	The reduction provided by Ebbing and Lohmann (Prop.~\ref{prop:mc-np-complete}) uses Kripke semantics (as they aim for a modal logic related result).
	We slightly modify it to fit our presentation (the correctness proof is the same).
	Let $\Phi \dfn C_1\land \ldots C_m$ be a 3CNF over the variables $\{x_1,\ldots, x_n\}$.
	We form an instance $\langle T,\Psi \rangle$ of $\pdl\text-\MC$ such that $\VAR(\Psi)=\{p_1,\ldots,p_n, r_1\ldots,r_n\}$.
	The team $T= \{s_1,\ldots, s_m\}$ contains $m$ assignments where each assignment $s_i\colon\VAR(\Psi)\rightarrow \{0,1\} $ is defined as follows,
	\begin{alignat*}{2}
		& s_i(p_j) =  s_i(r_j) =  1  && \text{ if } x_j \in C_i,\\
		& s_i(p_j) = 0,  s_i(r_j) =  1  &&\text{ if } \neg x_j \in C_i,\\
		& s_i(p_j) =  s_i(r_j) =  0  &&\text{ otherwise. } 
	\end{alignat*}
Finally, let $\Psi \dfn \bigvee\limits\limits_{j=1}^{n}\left(r_j\land \depa{}{p_j}\right)$.
The proof of $T\models \Psi \iff \Phi \in \SAT$ follows from \cite[Thm. 3.2]{DBLP:conf/sofsem/EbbingL12}.
Notice that none other parameter except \formulatw is fixed in advance.
The syntax circuit of $\Psi$ yields a tree, as a consequence, $\formulatw = 1$.
This completes the proof.  
\end{proof}
Notice that the formula in the reduction from $\problemFont{3SAT}$ has fixed arity for any dependence atom (that is, $\arity =0$).
As a consequence, we obtain the following corollary.
\begin{corollary}\label{mc:arity}
		$\MC$ parameterised by \arity is $\para\NP$-complete.
\end{corollary}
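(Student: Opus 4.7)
The plan is to reuse the reduction from the previous theorem essentially verbatim and then appeal to Proposition~\ref{slice-NP-result}. For the upper bound, I would simply note that $\MC$ is in $\NP$ by Proposition~\ref{prop:mc-np-complete}, so it lies in $\para\NP$ regardless of the parameterisation.

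For the lower bound, I would inspect the formula $\Psi \dfn \bigvee_{j=1}^{n}\left(r_j\land \depa{}{p_j}\right)$ constructed in the proof of Theorem~\ref{thm: formulatw}. Every dependence atom occurring in $\Psi$ is a constancy atom of the form $\depa{}{p_j}$, whose first argument is empty; hence its arity is $0$. Consequently, the reduction from $\problemFont{3SAT}$ actually maps every instance to an $\MC$-instance whose parameter value $\arity$ equals~$0$. This means the $0$-slice of $\MC$ parameterised by $\arity$ is already $\NP$-hard.

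Applying Proposition~\ref{slice-NP-result} with $\ell = 0$ then yields $\para\NP$-hardness of $\MC$ parameterised by $\arity$, and combining with the $\para\NP$ upper bound gives completeness. I do not anticipate any real obstacle here, since the content of the corollary is exactly the observation that the previous hardness reduction happens to keep $\arity$ constant; the only thing to double-check is that the definition of $\arity$ (the maximum of $|P|$ over all occurring atoms $\depa{P}{Q}$) indeed admits the value $0$ for constancy atoms, which it does by the explicit clause in the syntax allowing $X = \emptyset$ in $\depa{X}{Y}$.
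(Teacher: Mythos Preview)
Your proposal is correct and matches the paper's own argument essentially verbatim: the paper simply observes that the reduction in Theorem~\ref{thm: formulatw} uses only constancy atoms $\depa{}{p_j}$, so $\arity=0$ is fixed, and then invokes Proposition~\ref{slice-NP-result}.
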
 

The main source of difficulty in the model checking problem seems to be the split-junction operator. 
For a team of size $k$ and a formula with only one split-junction there are $2^k$ many candidates for the correct split and each can be verified in polynomial time. 
As a result, an exponential runtime in the input length seems necessary. 
However, if the team size ($k$) is considered as a parameter then the problem can be solved in polynomial time with respect to the input size and exponentially in the parameter. 
We consider both parameters (\teamsize and \splits) in turn.
\begin{theorem}\label{mc:teamsize}
	$\MC$ parameterised by $\teamsize$ is in \FPT.
\end{theorem}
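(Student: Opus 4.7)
The plan is to run a bottom-up dynamic programming along the syntax tree (or circuit) of $\Phi$. Since $|T|=k$, there are only $2^k$ subteams of $T$, and the key observation is that downwards-closedness together with the locality of team semantics means that for every subformula $\psi$ of $\Phi$ we can tabulate the set
\[
\mathcal{T}_\psi \;\dfn\; \{\, T'\subseteq T \mid T'\models\psi\,\}.
\]
Each table has size at most $2^k$, which is where the FPT dependence on the parameter enters; the final answer $T\models\Phi$ is read off as $T\in\mathcal{T}_\Phi$.

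First I would handle the atomic base cases. For $\top$, $x$, $\lnot x$, and $\depa{P}{Q}$, testing whether $T'\models\psi$ for a fixed $T'\subseteq T$ is doable in polynomial time directly from the semantic clauses, so the full table $\mathcal{T}_\psi$ can be produced in $O(2^k\cdot k^2\cdot |\Phi|)$. For $\bot$ and $\lnot\depa{P}{Q}$, only $\mathcal{T}_\psi=\{\emptyset\}$ is recorded. Next, the inductive step for conjunction is immediate from the semantics:
\[
\mathcal{T}_{\varphi\land\psi}=\mathcal{T}_\varphi\cap\mathcal{T}_\psi,
\]
which is computable in $O(2^k)$ time given the two child tables. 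For the split-junction, the clause $T'\models\varphi\lor\psi$ iff $\exists T_1,T_2$ with $T_1\cup T_2=T'$, $T_1\models\varphi$, $T_2\models\psi$ translates into
\[
\mathcal{T}_{\varphi\lor\psi}=\{\,T_1\cup T_2 \mid T_1\in\mathcal{T}_\varphi,\; T_2\in\mathcal{T}_\psi\,\},
\]
which can be built in $O(2^k\cdot 2^k\cdot k)$ time by iterating over all pairs from the children's tables. Crucially, this clause is identical in shape for both lax and strict semantics (for strict one additionally checks $T_1\cap T_2=\emptyset$), so the algorithm is oblivious to that choice.

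Multiplying the per-node cost by the $O(|\Phi|)$ many subformulas of $\Phi$ yields a total running time of $O\!\left(4^{k}\cdot k\cdot |\Phi|^{O(1)}\right)$, which is of the form $f(k)\cdot |x|^{O(1)}$ and therefore witnesses membership in $\FPT$. The main subtlety to get right is simply the split-junction step, where one has to enumerate pairs of subteams rather than single subteams; because $|\mathcal{T}_\varphi|,|\mathcal{T}_\psi|\leq 2^k$ this enumeration remains within the FPT budget. Everything else, including the treatment of the empty team (which satisfies every formula, a fact we use tacitly for the $\lor$-closure and the $\lnot$-cases), is routine.
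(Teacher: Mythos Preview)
Your proposal is correct and is essentially the same bottom-up dynamic programming as the paper's Algorithm~\ref{alg:mc-teamsize}: compute, for each subformula, the set of subteams of $T$ that satisfy it, intersect at $\land$, and take pairwise unions at $\lor$, with the same $O(2^{2k})$ per-node bound. The only minor quibble is that the appeal to downwards-closedness in your first paragraph is unnecessary---the tabulation works for any team-semantic operator simply because there are at most $2^k$ subteams.
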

\begin{proof}
We claim that Algorithm~{\ref{alg:mc-teamsize}} solves the task in fpt-time.
The correctness follows from the fact that the procedure is simply a recursive definition of truth evaluation of \pdl-formulas in bottom-up fashion.
%

\begin{algorithm}[t]
\caption{check($T,\Phi$), recursive bottom-up algorithm solving the $\MC$ parameterised by \teamsize.}\label{alg:mc-teamsize}
\SetKwInOut{Input}{Input}
\SetKwInOut{Output}{Output}
\LinesNumbered
\DontPrintSemicolon

\Input{A $\PDL$-formula $\Phi$ and a team $T$}
\Output{true if $T\models \Phi$, otherwise false}

\lForEach{non-root node $v$ in the syntax tree}{
	$L_v=\setdefinition{\emptyset}$
}

\ForEach(// find all possible sub-teams for $\ell$){leaf $\ell$ of the syntax tree}{
	$L_\ell = \setdefinition{\emptyset}$\;
	\ForEach{$P\subseteq T$}{
		\lIf{$\ell = X$ and $\forall s\in P: s(X)=1$}{ 
			$L_\ell \leftarrow L_\ell \cup \setdefinition{P}$
		}
		\lElseIf{$\ell = \neg X $ and $\forall s\in P: s(X)=0 $}{ 
			$L_\ell \leftarrow L_\ell \cup \setdefinition{P}$
		}
		\ElseIf{$\ell=\depa{Q}{r}$ and $\forall s_i\forall s_j \smash{\underset{q \in Q}{\bigwedge}} s_i(q)= s_j (q) \Rightarrow s_i(r)= s_j (r) $}{
			$L_\ell \leftarrow L_\ell \cup \setdefinition{P}$
		}
	}
}
\ForEach{$\alpha_1, \alpha_2$ with $\alpha = \alpha_1\circ \alpha_2$ and $L_{\alpha_i}\neq\emptyset$ for $i=1,2$}{
	\ForEach{$P\in L_{\alpha_1}$, $Q\in L_{\alpha_2}$}{ 
		\lIf{$\circ=\land$ and $P=Q$}{
			$L_\alpha \leftarrow L_\alpha \cup \setdefinition{P}$
		}
		\lElseIf{$\circ=\lor$}{
			$L_\alpha \leftarrow L_\alpha \cup \setdefinition{P\cup Q}$	
		}
	}
}
	
	\leIf{$T \in L_\Phi $}{\Return true}{\Return false}
\end{algorithm}
Recall that the input formula $\Phi$ is a binary tree.
The procedure starts at the leaf level by checking whether for each subformula $\alpha$ and each subteam $P\subseteq T$, $P \models \alpha$.
Then recursively, if $P \models \alpha_i$ for $i=1,2$ and there is a subformula $\alpha$ such that $\alpha = \alpha_1\land \alpha_2$ then it answers that $P \models \alpha$.
Moreover, if $P_i \models \alpha_i$ for $i=1,2$ and there is a subformula $\alpha$ such that $\alpha = \alpha_1\lor \alpha_2$ then it answers that $P \models \alpha$ where $P= P_1\cup P_2$.

The first loop runs in $O^\star(2^k)$ steps for each leaf node and there are $|\Phi|$ many iterations, which gives a running time of $|\Phi|\cdot O^\star(2^k)$, where $\teamsize = k$.
At each inner node, there are at most $2^k$ candidates for $P$ and $Q$ and
as a consequence, at most $2^{2k}$ pairs that need to be checked.
This implies that the loop for each inner node can be implemented in $O^\star(2^{2k})$ steps. 
Furthermore, the loop runs once for each pair of subformulas $\alpha_1, \alpha_2$ such that $\alpha_1\circ \alpha_2$ is also a subformula of $\Phi$.
This gives a running time of $|\Phi|\cdot O^\star(2^{2k})$ for this step.
Finally, in the last step a set of size $k$ needs to be checked against a collection containing $2^k$ such sets, this can be done in $k\cdot O(2^k)$ steps.

We conclude that the above procedure solves the $\MC$ problem in $p(|\Phi|)\cdot O(2^{2k})$ steps for some polynomial $p$.
The fact that we do not get a blow-up in the number of subformulas is due to the reason that the formula tree is binary.
The procedure operates on a pair of subformulas in each step and the label size ($|L_\alpha|$) at the end of this step is again bounded by $2^k$. 
\end{proof}

Regarding the parameter $\splits$, we show $\para\NP$-completeness by reducing from the $3$-colouring problem ($\problemFont{3COL}$) and applying Proposition~\ref{slice-NP-result}.
The idea of the reduction from $\problemFont{3COL}$ is to construct a team as shown in Figure~\ref{fig:example-team} in combination with the disjunction of three times the formula $\bigwedge_{e_k= \setdefinition{v_i, v_j}}  \depa{y_k}{x_i}.$ 
Intuitively, vertices of the graph correspond to assignments in the team and the three splits then map to the three colours.
\begin{figure}[t]
\centering
\begin{tikzpicture}[auto, sibling distance= 2cm, level distance = 1.5 cm, scale=.75,  node/.style = {draw, shape =circle}, edge/.style = {} ] 
    \node[node] at (-2,0) (1) {$v_j$};
    \node[node] at (0,0) (2) {$v_i$};
    \node[node] at (0,-2) (3) {$v_k$};
 	\draw (1) to node {$e_\ell$} (2) ; 
 	\draw  (2) to node  {$e_m$}(3);
 	
 	\node[anchor=west] at (2,-1) {\begin{tabular}{cccccccccc}\toprule 
		 & $x_i$ &$x_j$ &$x_k$ &$y_{\ell,i}$ &$y_{\ell,j}$ &$y_{\ell,k}$&$y_{m,i}$ &$y_{m,j}$ &$y_{m,k}$ \\\midrule
		 $s_i$ & 0 &1 &1 &1 & 1 &1 &1  &1 & 1 \\
		$s_j$ & 1 &0 &0 &1 & 1 &1 &1  &0 & 1\\
		$s_k$ & 1 &0 &0 &1 & 1 &0 &1  &1 & 1\\
\bottomrule
		\end{tabular}};
\end{tikzpicture}
\caption{A graph $\calG: \langle\setdefinition{v_i,v_j,v_k},\setdefinition{e_l,e_m} \rangle$ and a corresponding team.}\label{fig:example-team}
\end{figure}
\begin{theorem}[$\star$] \label{mc:splits}
	$\MC$ parameterised by \splits is $\para\NP$-complete.
\end{theorem}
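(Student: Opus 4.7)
The plan is to prove $\para\NP$-completeness via Proposition~\ref{slice-NP-result}: membership in $\para\NP$ is immediate from Proposition~\ref{prop:mc-np-complete}, and for hardness I would show that a fixed slice of the parameterised problem is already $\NP$-hard, by reducing $\problemFont{3COL}$ to the $2$-slice. The reduction realises the team--formula template hinted at in Figure~\ref{fig:example-team}.

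Given a graph $G=(V,E)$ with $V = \{v_1,\dots,v_n\}$ and $E = \{e_1,\dots,e_m\}$, I would introduce variables $x_1,\dots,x_n$ together with $y_{k,r}$ for $k \in \{1,\dots,m\}$ and $r \in \{1,\dots,n\}$, and build a team $T = \{s_1,\dots,s_n\}$ with one assignment per vertex. For each edge $e_k$, fix a canonical endpoint $v_{i_k}$, and set $s_{i_k}(x_{i_k}) = 0$ but $s_j(x_{i_k}) = 1$ for every $j \neq i_k$; set $s_i(y_{k,r}) = 0$ iff $r = i$ and $v_i \notin e_k$, and $1$ otherwise. Finally, let $\psi \dfn \bigwedge_{k=1}^{m} \depa{\{y_{k,1},\dots,y_{k,n}\}}{x_{i_k}}$ and $\Psi \dfn \psi \lor \psi \lor \psi$. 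The only split-junctions in $\Psi$ are the outer two, so $\splits(\Psi) = 2$ holds independently of $G$, which is exactly what the slice argument requires.

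The central correctness statement to prove is: for every $T' \subseteq T$, $T' \models \psi$ iff $\{v_i : s_i \in T'\}$ is an independent set of $G$. The crux is that $s_p$ and $s_q$ agree on all of $y_{k,1},\dots,y_{k,n}$ iff $\{v_p, v_q\} \subseteq e_k$: if some endpoint, say $v_p$, lies outside $e_k$, then $s_p(y_{k,p}) = 0$ whereas $s_q(y_{k,p}) = 1$, while if both sit in $e_k$, every $y_{k,r}$ takes value $1$ in both assignments. Combined with $s_{i_k}(x_{i_k}) \neq s_j(x_{i_k})$ for the other endpoint $v_j$ of $e_k$, the atom $\depa{\{y_{k,1},\dots,y_{k,n}\}}{x_{i_k}}$ is violated in $T'$ precisely when $T'$ contains both endpoints of some edge. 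Under lax team semantics $T \models \Psi$ unfolds to $T$ being the union of three sub-teams each satisfying $\psi$, equivalently to $V$ being partitionable into three independent sets, i.e., $G$ being $3$-colourable.

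The main obstacle I anticipate is the bookkeeping for the $y$-variables and a careful verification of the chain of ``iff''s in the correctness lemma; once these are stated precisely, both directions of the reduction follow directly (colour classes furnish the cover, a cover furnishes the colouring), and Proposition~\ref{slice-NP-result} delivers $\para\NP$-hardness.
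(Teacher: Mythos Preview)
Your proposal is correct and follows essentially the same route as the paper's proof: a reduction from $\problemFont{3COL}$ to the $2$-slice via one assignment per vertex, the $y$-tuples encoding incidence so that two assignments agree on $y_k$ exactly when they are the two endpoints of $e_k$, and the $x$-variable forcing a disagreement in that case; the paper's formulation differs only cosmetically (it sets $s_i(x_j)=1$ precisely when $\{v_i,v_j\}\in E$ rather than for all $j\neq i$, but since the formula only inspects $x_{i_k}$ for endpoints this is immaterial). Your phrasing ``partitionable into three independent sets'' slightly overshoots lax semantics (which only gives a cover), but a cover by independent sets refines to a partition, so the equivalence with $3$-colourability holds either way.
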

The following cases then can be easily deduced.
\begin{theorem}[$\star$]\label{mc:remaining-cases}
	$\MC$ under the parameterisations \formula, \formuladepth or \variables is  $\FPT$.
\end{theorem}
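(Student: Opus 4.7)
The plan is to reduce all three cases to the already-established FPT result for \teamsize (Theorem~\ref{mc:teamsize}) by invoking the parameter-domination principle (Proposition~\ref{parameter-bound}) together with the chain of inequalities proven in Lemma~\ref{para:relations}. The key observation is that each of \formula, \formuladepth and \variables bounds \teamsize by a computable function, so FPT-membership transfers mechanically.

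First I would handle \variables: since a team is a set of assignments over the relevant variables, Lemma~\ref{para:relations}(1) gives $\teamsize(T,\Phi)\le 2^{\variables(T,\Phi)}$, hence Proposition~\ref{parameter-bound} applied with $k=\teamsize$ and $\ell=\variables$ lifts the FPT-algorithm of Theorem~\ref{mc:teamsize} to an fpt-algorithm in \variables. Next, for \formula, note that $\variables(T,\Phi)\le\formula(T,\Phi)$ because every variable that appears in $\Phi$ contributes at least one symbol to its size; combined with the previous step this yields $\teamsize\le 2^{\formula}$ (also recorded as Lemma~\ref{para:relations}(2)), and a second application of Proposition~\ref{parameter-bound} delivers FPT-membership in \formula. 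Finally, for \formuladepth, Lemma~\ref{para:relations}(3) gives $\formula\le 2^{2\cdot\formuladepth}$, so $\teamsize\le 2^{2^{2\cdot\formuladepth}}$, which is still a computable function of the parameter; one more application of Proposition~\ref{parameter-bound} closes this case.

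The lower bounds (\paraNP-hardness is not being claimed here, only completeness as stated in Table~\ref{tbl:overview}) follow from Proposition~\ref{prop:mc-np-complete} since these parameters are bounded by the input size and \NP-hardness of the unparameterised problem already yields \para\NP-hardness via Proposition~\ref{slice-NP-result} in the appropriate slice; for the matching upper bounds (\FPT) we rely entirely on the three reductions above.

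There is no serious obstacle: the work was done in Lemma~\ref{para:relations} and Theorem~\ref{mc:teamsize}, and what remains is only to assemble them through Proposition~\ref{parameter-bound}. The only mild subtlety worth flagging explicitly in the write-up is that the resulting fpt-runtime for \formuladepth is a tower of height three in the parameter, but this is still permitted by the definition of \FPT.
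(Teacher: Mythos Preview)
Your proposal is correct and follows exactly the paper's route: apply Proposition~\ref{parameter-bound} together with the inequalities of Lemma~\ref{para:relations} to lift the $\FPT$ result for $\teamsize$ (Theorem~\ref{mc:teamsize}) to each of the three parameters. Your third paragraph about ``lower bounds'' is unnecessary and slightly off---the theorem claims only $\FPT$ membership, and the ``completeness'' phrasing in Table~\ref{tbl:overview} is not asking you to establish any hardness here---so you should simply drop that paragraph.
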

Finally, the case for $\formulateamtw$ follows due to Corollary~\ref{cor:tw-teamsize} in conjunction with the $\FPT$ result for $\teamsize$ (Lemma~\ref{mc:teamsize}).
\begin{corollary}\label{mc-formulateamtw}
	$\MC$ parameterised by $\formulateamtw$ is $\FPT$.
\end{corollary}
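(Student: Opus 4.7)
The plan is to chain three results already in the paper: Corollary~\ref{cor:tw-teamsize}, Theorem~\ref{mc:teamsize}, and the parameter-bounding principle of Proposition~\ref{parameter-bound}. By Corollary~\ref{cor:tw-teamsize}, there is a computable function $f$ such that $\teamsize(T,\Phi) \leq f(\formulateamtw(T,\Phi))$ for every instance $(T,\Phi)$ of $\MC$. Explicitly, $f$ can be taken as $f(w) = \max\{w, 2^w\} = 2^w$, since in the proof of Corollary~\ref{cor:tw-teamsize} one case gives $\teamsize \leq \formulateamtw$ directly, while the other case goes through $\variables \leq \formulateamtw$ and then $\teamsize \leq 2^{\variables}$ (Lemma~\ref{para:relations}).

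Given this inequality, the corollary follows immediately from Proposition~\ref{parameter-bound} once we invoke Theorem~\ref{mc:teamsize}, which established that $\MC$ parameterised by $\teamsize$ is in $\FPT$. Indeed, the algorithm of Theorem~\ref{mc:teamsize} runs in time $p(|\Phi|) \cdot 2^{O(\teamsize)}$, and substituting the bound on $\teamsize$ yields a runtime of $p(|\Phi|) \cdot 2^{O(2^{\formulateamtw})}$, which is of the form $g(\formulateamtw) \cdot p(|\Phi|)$ for a computable function $g$. This is precisely the required fpt runtime.

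There is no real obstacle to overcome in this argument: all the heavy lifting was done in establishing Lemma~\ref{para:tw-teamsize} (the structural treewidth-versus-teamsize dichotomy), after which both Corollary~\ref{cor:tw-teamsize} and the present corollary are essentially bookkeeping. The only subtlety is that one must not be confused by the fact that $f$ here is double-exponential in $\formulateamtw$: since the parameterised complexity class $\FPT$ permits \emph{any} computable dependence on the parameter, a tower of exponentials is harmless.
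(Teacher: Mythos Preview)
Your proof is correct and follows exactly the paper's approach: combine Corollary~\ref{cor:tw-teamsize} (so $\teamsize$ is bounded by a computable function of $\formulateamtw$) with Theorem~\ref{mc:teamsize} via Proposition~\ref{parameter-bound}. The paper states this as a one-line justification before the corollary; your additional explicit computation of the bound and the resulting runtime is fine (though note your final remark should refer to the overall runtime dependence $g$, not $f$, as being double-exponential).
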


\section{Satisfiability}
In this section, we study $\SAT$ under various parameterisations, so the question of whether there exists a team $T$ for a given formula $\Phi$ such that $T\models\Phi$. 
Notice first, that the question is equivalent to finding a singleton team.
This is since $\pdl$ is downwards closed.
Consequently, if there is a satisfying team, then a singleton team satisfies the formula.
As a result, team semantics coincides with the usual Tarskian semantics. 
This facilitates, for example, determining the truth value of disjunctions in the classical way.
Accordingly, simplifying the notation a bit, for $\SAT$ we now look for an assignment rather than a singleton team that satisfies the formula. 
\begin{corollary}\label{lem:sat-formulateamtw}
	The problem $\SAT$ under the parameterisations $\formulateamtw$ and $\formulatw$ is same. 
\end{corollary}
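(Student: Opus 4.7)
The plan rests on the observation, made immediately above the statement, that by downward-closure of $\pdl$ a formula $\Phi$ is satisfiable iff some singleton team $\{t\}$ satisfies it. Hence, when interpreting $\formulateamtw$ on a $\SAT$ instance via a witness team, we may assume $|T|=1$, so the only contribution to the Gaifman graph beyond the syntax-circuit of $\Phi$ is a single assignment-vertex $c$ linked via $\istrue$ or $\isfalse$ to every variable occurring in $\Phi$.

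First I would establish the easy direction $\formulatw(\Phi)\le\formulateamtw(\Phi)$: deleting $c$ together with its incident edges from a tree decomposition of the enlarged Gaifman graph yields a tree decomposition of the syntax-circuit Gaifman graph, and deletion of vertices cannot raise treewidth. For the reverse direction, I would start from an optimal tree decomposition $\mathcal T$ of the syntax circuit witnessing $\formulatw(\Phi)=w$ and insert $c$ into \emph{every} bag of $\mathcal T$. The three axioms of a tree decomposition remain valid: each vertex is still covered; each new $\istrue/\isfalse$-edge from a variable $x$ to $c$ is covered by any bag containing $x$, which now also contains $c$; and the subtree induced by $c$ is the whole decomposition, hence connected. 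The width grows by at most one.

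Combining the two directions yields $\formulatw(\Phi)\le\formulateamtw(\Phi)\le\formulatw(\Phi)+1$, so the parameters are fpt-equivalent and the corollary follows from Theorem~\ref{sat:formulatw} (or transfers any future result for one parameter to the other). The only subtle point, and therefore the one I would spell out carefully, is the meaning of $\formulateamtw$ in the $\SAT$ setting, since the input consists only of a formula; this is precisely what the singleton-team reduction dispatches. Beyond that, the argument is entirely routine and presents no real obstacle.
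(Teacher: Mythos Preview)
Your proposal is correct and follows the same idea the paper relies on: the paper does not give a separate proof of this corollary but lets it stand as an immediate consequence of the preceding observation that, by downward closure, $\SAT$ reduces to finding a singleton team, so the team contributes nothing essential to the Gaifman graph. Your argument makes this explicit via the bound $\formulatw(\Phi)\le\formulateamtw(\Phi)\le\formulatw(\Phi)+1$, which is more detail than the paper provides but entirely in its spirit.
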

\noindent The following result is obtained by classical $\SAT$ being $\NP$-complete \cite{DBLP:conf/stoc/Cook71,levin73}.
\begin{corollary}\label{thm:sat-teamsize-arity}
	$\SAT$ under the parameterisations  $\teamsize$, or $\arity$ is \linebreak $\para\NP$-complete.
\end{corollary}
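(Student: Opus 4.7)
The plan is to invoke Proposition~\ref{slice-NP-result} in both cases, exhibiting a fixed slice of the parameterised problem that is already \NP-complete. The membership direction is free: $\SAT$ for $\pdl$ is in \NP{} by Lohmann and Vollmer~\cite{DBLP:journals/sLogica/LohmannV13}, so both parameterisations lie in $\para\NP$.

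For the lower bound under $\teamsize$, I would consider the $1$-slice, i.e.\ the restriction of $\SAT$ to instances where a satisfying team of size one is sought. The key observation is that on a singleton team $\setdefinition{t}$, every dependence atom $\depa{P}{Q}$ is trivially satisfied (two assignments agreeing on $P$ force them to be the same assignment, so they agree everywhere), and the classical connectives under team semantics degenerate to the standard Tarskian connectives by flatness of $\pl$-formulas and the discussion preceding Corollary~\ref{lem:sat-formulateamtw}. Hence a $\pdl$-formula $\Phi$ admits a singleton satisfying team iff its underlying classical structure, viewed as a propositional formula, is satisfiable. Feeding an arbitrary classical SAT instance into this setting yields a polynomial-time reduction from classical SAT to the $1$-slice, and Proposition~\ref{slice-NP-result} concludes $\para\NP$-hardness.

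For the lower bound under $\arity$, I would consider the $0$-slice. A $\pdl$-formula without any dependence atoms of positive arity is, up to the presence of constancy atoms, a classical propositional formula; and a purely propositional SAT-instance has $\arity = 0$ vacuously. Again, classical SAT reduces trivially to the $0$-slice of $\SAT$, which is therefore \NP-hard, and Proposition~\ref{slice-NP-result} delivers $\para\NP$-hardness.

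There is no significant obstacle: the work has already been done in observing that downward closure of $\pdl$ makes singleton teams sufficient for satisfiability (as noted in the paragraph introducing this section), and the two chosen slices are engineered so that the $\pdl$-specific constructs become inert, reducing the problem to classical \NP-complete SAT.
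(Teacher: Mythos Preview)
Your proposal is correct and follows essentially the same approach as the paper: the paper's own proof is a two-line argument observing that the $1$-slice for $\teamsize$ (a singleton team is an assignment) and the $0$-slice for $\arity$ (no dependence atoms at all) are $\NP$-hard, then implicitly invoking Proposition~\ref{slice-NP-result}. Your version simply spells out the justification for these slice hardnesses and makes the upper bound explicit via the Lohmann--Vollmer result.
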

\begin{proof}
The $1$-slice regarding \teamsize (a singleton team is the same as an assignment), and the $0$-slice regarding \arity (no dependence atoms at all) is \NP-hard.	
\end{proof}


Turning towards treewidth, notice that classical propositional $\SAT$ is fixed-parameter tractable when parameterised by treewidth due to Samer and Szeider~\cite[Thm. 1]{DBLP:journals/jda/SamerS10}.
However, we are unable to immediately utilise their result because Samer and Szeider study CNF-formulas and we have arbitrary formulas instead.
Yet, Lück~et~al.~\cite[Cor. 4.7]{DBLP:journals/tocl/LuckMS17} studying temporal logics under the parameterised approach, classified, as a byproduct, the propositional satisfiability problem with respect to arbitrary formulas to be fixed-parameter tractable.
\begin{corollary}\label{sat:formulatw}
	$\SAT$ parameterised by $\formulatw$ is in \FPT.
\end{corollary}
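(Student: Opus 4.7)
The plan is to reduce \pdl-\SAT\ parameterised by $\formulatw$ to classical propositional satisfiability parameterised by the treewidth of the syntax circuit, and then invoke the FPT result of Lück~et~al.\ \cite[Cor.~4.7]{DBLP:journals/tocl/LuckMS17} that is already cited in the paragraph preceding the corollary.

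Step one exploits downward closure, already recalled at the beginning of this section: a formula $\Phi$ is \pdl-satisfiable iff it has a singleton satisfying team $\{s\}$. On a singleton every positive dependence atom $\depa{X}{Y}$ is vacuously satisfied, while every negated dependence atom $\lnot\depa{X}{Y}$ is unsatisfied (since $\{s\}\neq\emptyset$). Hence, defining $\Phi'$ by replacing each positive occurrence of a dependence atom in $\Phi$ by $\top$ and each negated occurrence by $\bot$, the formula $\Phi$ is \pdl-satisfiable iff the classical propositional formula $\Phi'$ is satisfiable in the Tarskian sense; this reduction runs in linear time.

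Step two verifies that $\formulatw(\Phi')\leq\formulatw(\Phi)$. The syntax circuit of $\Phi'$ is obtained from that of $\Phi$ by replacing each dependence-atom node and its edges to the variable leaves in $X\cup Y$ with a constant leaf. Any tree-decomposition of the circuit of $\Phi$ of width $k$ restricts to a tree-decomposition of the circuit of $\Phi'$ of width at most $k$ after deleting the discarded vertices from every bag; conditions (i)--(iii) of Definition~\ref{treewidth} are preserved by this pruning. Consequently, the treewidth parameter does not grow under the reduction.

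Step three is to feed $\Phi'$ to the algorithm of \cite[Cor.~4.7]{DBLP:journals/tocl/LuckMS17}, which decides propositional satisfiability of arbitrary formulas in fpt-time in the treewidth of the syntax circuit. Composing the two steps yields an FPT algorithm for \pdl-\SAT\ parameterised by $\formulatw$. The only mild obstacle is a bookkeeping check that the syntax-circuit treewidth used by Lück~et~al.\ coincides with $\formulatw$ up to an additive constant; this is immediate since both representations encode the formula's DAG with shared variable leaves and with unary/binary internal gates.
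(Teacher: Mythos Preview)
Your proposal is correct and follows essentially the same approach as the paper's own proof: reduce to a singleton team via downward closure, replace dependence atoms by propositional constants, observe that the treewidth of the syntax circuit does not increase, and invoke \cite[Cor.~4.7]{DBLP:journals/tocl/LuckMS17}. Your version is in fact slightly more careful than the paper's, which only says to replace every $\depa{P}{Q}$ by $\top$ without explicitly addressing the atomic formulas $\lnot\depa{P}{Q}$; your replacement of these by $\bot$ is the right fix, and your explicit argument that treewidth can only drop under vertex deletion is a welcome justification of the paper's bare assertion that ``this substitution does not increase the treewidth''.
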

\begin{proof}
As before, we need to find a singleton team.
This implies that split-junctions have the same semantics as classical disjunctions and dependence atoms are always satisfied.
So replacing every occurrence of a dependence atom $\depa{P}{Q}$ by $\top$ yields a propositional logic formula.
This substitution does not increase the treewidth.
Then the result follows by Lück~et~al.~\cite[Cor. 4.7]{DBLP:journals/tocl/LuckMS17}.
\end{proof}

Now, we turn towards the parameter $\splits$.
We present a procedure that constructs a satisfying assignment $s$ such that $s\models \Phi $ if there is one and  otherwise it answers no. 
The idea is that this procedure needs to remember the positions where a modification in the assignment is possible.
We show that the number of these positions is bounded by the parameter $\splits$.

Consider the syntax tree of $\Phi$ where, as before, multiple occurrences of subformulas are allowed. 
The procedure starts at the leaf level with satisfying assignment candidates (partial assignments, to be precise). 
Reaching the root it confirms whether it is possible to have a combined assignment or not. 
We assume that the leaves of the tree consist of literals or dependence atoms. 
Accordingly, the internal nodes of the tree are only conjunction and disjunction nodes. 
The procedure sets all the dependence atoms to be trivially \emph{true} (as we satisfy them via every singleton team). 
Additionally, it sets all the literals satisfied by their respective assignment. 
Ascending the tree, it checks the relative conditions for conjunction and disjunction by joining the assignments and thereby giving rise to conflicts. 
A conflict arises (only at a conjunction node) when two assignments are joined with contradicting values for some variable. 
At this point, it sets this variable $x$ to a conflict state $c$. 
At disjunction nodes the assignment stores that it has two options and keeps the assignments separately. 

Joining a \emph{true}-value from a dependence atom affects the assignment only at disjunction nodes. 
This corresponds to the intuition that a formula of the form $\depa{P}{Q}\lor \psi$ is true under any assignment. 
Whereas, at a conjunction node, when an assignment $s$ joins with a \emph{true}, the procedure returns the assignment $s$. 
Since at a split the procedure returns both assignments, for $k$ splits there could be $\leq 2^k$-many assignment choices. 
At the root node if at least one assignment is consistent then we have a satisfying assignment. 
Otherwise, if all the choices contain conflicts over some variables then there is no such satisfying singleton team. 

\begin{theorem}[$\star$]\label{sat:splits}
	$\SAT$ parameterised by \splits is $\FPT$. 
	Moreover, there is an algorithm that solves the problem in $O(2^{\splits(\Phi)}\cdot |\Phi|^{O(1)})$ for any $\Phi\in\PDL$.
\end{theorem}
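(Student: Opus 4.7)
The plan is to exploit downward closure of \pdl: any satisfiable team contains a satisfying singleton, so $\Phi$ is satisfiable iff there is an assignment $s$ with $\{s\}\models\Phi$. Under a singleton team, every dependence atom is vacuously true, and split-junction collapses to classical disjunction. So I would first preprocess $\Phi$ by replacing every occurrence of $\depa{P}{Q}$ by $\top$; this preserves satisfiability and does not change $\splits(\Phi)$. After this the formula uses only $\land$, $\lor$, literals, and $\top$, which simplifies the rest of the argument.

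Next I would give a bottom-up dynamic-programming algorithm on the syntax tree. At each subformula node $\psi$, the algorithm maintains a set $A_\psi$ of partial assignments on $\mathrm{Var}(\psi)$ representing the ``candidate satisfying choices'' for $\psi$: a literal $x$ yields $A_\psi=\{[x\mapsto1]\}$, a literal $\neg x$ yields $A_\psi=\{[x\mapsto0]\}$, and a $\top$-leaf yields $A_\psi=\{\varnothing\}$. At a conjunction node $\psi=\psi_1\land\psi_2$, we set $A_\psi$ to the set of pairwise unions $a_1\cup a_2$ with $a_i\in A_{\psi_i}$ that are compatible (i.e.\ assign no variable to conflicting values). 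At a disjunction node $\psi=\psi_1\lor\psi_2$, we set $A_\psi=A_{\psi_1}\cup A_{\psi_2}$. Finally, $\Phi$ is satisfiable iff $A_\Phi\neq\varnothing$, and any element extends to a full satisfying assignment by assigning remaining variables arbitrarily.

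Correctness is a routine induction: $a\in A_\psi$ iff $a$ is a partial assignment whose arbitrary total extension satisfies $\psi$ in the classical (equivalently, singleton-team) sense. For the runtime, I would prove by induction on the tree the invariant $|A_\psi|\le 2^{\splits(\psi)}$. At leaves $|A_\psi|\le 1=2^0$; at a conjunction $|A_\psi|\le|A_{\psi_1}|\cdot|A_{\psi_2}|\le 2^{\splits(\psi_1)+\splits(\psi_2)}=2^{\splits(\psi)}$; at a disjunction $|A_\psi|\le|A_{\psi_1}|+|A_{\psi_2}|\le 2^{\splits(\psi_1)}+2^{\splits(\psi_2)}\le 2^{\splits(\psi_1)+\splits(\psi_2)+1}=2^{\splits(\psi)}$. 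Each step processes one node by performing at most $|A_{\psi_1}|\cdot|A_{\psi_2}|\le 2^{k}$ compatibility checks, each in polynomial time, yielding the claimed $O(2^{k}\cdot|\Phi|^{O(1)})$ bound.

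The main obstacle is the conjunction step: a naive bound $|A_\psi|\le|A_{\psi_1}|\cdot|A_{\psi_2}|$ does not obviously give $2^k$ unless the splits budget is distributed multiplicatively, which is exactly what the inductive accounting above accomplishes. I would also have to be careful that ``partial assignments'' are stored canonically (say as functions on $\mathrm{Var}(\psi)$) so that union and compatibility checks are polynomial, and that discarding conflicting joins at $\land$-nodes does not lose satisfying assignments — this is immediate since a total extension of a conflicting partial assignment cannot satisfy $\psi_1\land\psi_2$. With these points handled, the theorem and the explicit runtime follow.
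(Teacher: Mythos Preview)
Your proposal is correct and follows essentially the same approach as the paper: both reduce to singleton-team (classical) satisfiability, run a bottom-up pass over the syntax tree maintaining a set of candidate partial assignments (products at $\land$, unions at $\lor$), and bound the label size by $2^{\splits(\psi)}$ via the same multiplicative/additive inductive accounting you give. The only cosmetic differences are that the paper keeps conflicting joins tagged with an explicit conflict marker $c$ and tests consistency only at the root, whereas you discard incompatible pairs immediately at $\land$-nodes, and the paper uses a special sentinel for $\top$/dependence atoms where you use the empty partial assignment; neither affects correctness or the $O(2^{\splits(\Phi)}\cdot|\Phi|^{O(1)})$ bound.
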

\noindent The remaining cases follow easily.
\begin{theorem}[$\star$]\label{sat:remaining-cases}
	$\SAT$ parameterised by \variables,  \formula \linebreak or \formuladepth is $\FPT$.
\end{theorem}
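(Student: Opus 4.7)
The plan is to leverage the earlier observation (used in Corollary~\ref{sat:formulatw}) that $\SAT$ for $\pdl$ reduces to singleton-team satisfiability: by downward closure, a formula $\Phi$ is satisfiable iff some single assignment $s$ satisfies it, in which case every dependence atom is trivially satisfied and split-junctions coincide with classical disjunctions. Hence one may equivalently treat $\Phi$ as a classical propositional formula obtained by replacing every $\depa{P}{Q}$ with $\top$, without changing any of the three parameters in question.

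First I would handle the parameter $\variables$ directly. Given a $\pdl$-formula $\Phi$ with $n=\variables(\Phi)$ propositional variables, enumerate all $2^n$ assignments $s\colon\mathrm{VAR}(\Phi)\to\{0,1\}$ and, for each, test $\{s\}\models\Phi$ in polynomial time by a straightforward bottom-up evaluation of the syntax tree (dependence atoms return true, literals evaluate under $s$, and the connectives behave classically). This yields an overall running time of $O(2^{\variables(\Phi)}\cdot|\Phi|^{O(1)})$, establishing membership in $\FPT$.

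For the remaining two parameters I would simply invoke Proposition~\ref{parameter-bound} together with the size comparisons from Lemma~\ref{para:relations}. Clearly $\variables(\Phi)\le\formula(\Phi)$, since each variable occurrence contributes at least one symbol to $\Phi$; hence the $\FPT$ algorithm above, parameterised by $\variables$, is also $\FPT$ when parameterised by $\formula$. For $\formuladepth$, Lemma~\ref{para:relations}(3) gives $\formula(\Phi)\le 2^{2\cdot\formuladepth(\Phi)}$, so a second application of Proposition~\ref{parameter-bound} transfers fixed-parameter tractability from $\formula$ to $\formuladepth$.

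There is no real obstacle here: the work has already been done either in the brute-force step for $\variables$ or in the parameter-comparison lemmas. The only minor care needed is to observe that the preprocessing step (erasing dependence atoms) preserves all three parameters, which is immediate from their definitions.
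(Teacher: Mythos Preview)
Your proposal is correct and follows essentially the same approach as the paper: reduce to singleton-team (hence classical) satisfiability, handle $\variables$ by brute-force enumeration of the $2^{\variables}$ assignments, and obtain the other two cases by the parameter bounds of Lemma~\ref{para:relations} together with Proposition~\ref{parameter-bound}. The only cosmetic difference is that for $\formula$ the paper observes directly that $\formula=|\Phi|$ and any decidable problem parameterised by input length is trivially in $\FPT$, whereas you route through $\variables\le\formula$; both arguments are immediate.
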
 

\subsection*{A satisfiability variant.}
The shown results suggest that it might be interesting to study the following variant of $\SAT$, in which we impose an additional input $1^m$ (unary encoding) with $m\ge2$ and ask for a satisfying team of size $m$. 
Let us call the problem $\mSAT$. 
We wish to emphasise that $\mSAT$ is not the same as the $\SAT$ parameterised by \teamsize. 

\begin{theorem}[$\star$]\label{m-sat:all-cases}
$\mSAT$ under the parameterisations $\variables$, $\formula$, or $\formuladepth$ is $\FPT$.
\end{theorem}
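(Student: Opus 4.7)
The plan is to reduce $\mSAT$ to classical propositional satisfiability and then invoke the brute-force $\FPT$ algorithm for classical $\SAT$ under $\variables$, transferring the result to the other two parameters via Lemma~\ref{para:relations} and Proposition~\ref{parameter-bound}.

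First, I would define the translation $\Phi \mapsto \Phi'$ that replaces every dependence atom $\depa{P}{Q}$ by $\top$ and every negated dependence atom $\neg \depa{P}{Q}$ by $\bot$. Trivially, $\variables(\Phi')\leq\variables(\Phi)$, $\formula(\Phi')\leq\formula(\Phi)$, and $\formuladepth(\Phi')\leq\formuladepth(\Phi)$. The central lemma I would prove is that, for $m\ge 2$, some team of size exactly $m$ satisfies $\Phi$ iff $\Phi'$ is classically satisfiable. The forward direction is easy: given a satisfying team $T$, downward closure of $\pdl$ yields $\{t\}\models\Phi$ for any $t\in T$, and a routine structural induction shows that on singletons $\pdl$-satisfaction coincides with Tarskian satisfaction of $\Phi'$, since dependence atoms are vacuous on singletons while their negations force emptiness.

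For the converse, given a classical witness $t\models\Phi'$, I would exploit the countable infinitude of $\VAR$ to pick $m$ pairwise distinct full assignments $t_1,\ldots,t_m$ that all agree with $t$ on the finite set $\VAR(\Phi)$. A second induction on $\Phi$ then shows $T=\{t_1,\ldots,t_m\}\models\Phi$: literal and dependence-atom cases are immediate because every element of $T$ agrees with $t$ on all of $\VAR(\Phi)$; conjunctions follow by induction; and a disjunction $\varphi\lor\psi$ is handled by observing that $t$ classically satisfies one of $\varphi'$ or $\psi'$, so the induction hypothesis gives that the whole of $T$ satisfies the corresponding disjunct, after which the trivial lax split $T=T\cup\emptyset$ (together with the fact that the empty team satisfies every $\pdl$-formula) closes the case.

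Once the equivalence is in place, the algorithm is immediate: enumerate all $2^{\variables(\Phi)}$ assignments to $\VAR(\Phi)$ and evaluate $\Phi'$ on each in polynomial time, yielding an $O^\star(2^{\variables(\Phi)})$ procedure for $\mSAT$. By Lemma~\ref{para:relations} we have $\variables\le\formula\le 2^{2\formuladepth}$, so Proposition~\ref{parameter-bound} extends the $\FPT$ bound also to $\formula$ and $\formuladepth$. The main obstacle I anticipate is correctly handling the rigid behaviour of $\neg\depa{P}{Q}$ (satisfied only by the empty team) and its interaction with lax split-junctions; once the translation maps it to $\bot$ and one notices that the constancy of $T$ on $\VAR(\Phi)$ trivialises every inductive case, the proof goes through with minimal friction.
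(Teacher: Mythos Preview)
Your reduction of $\mSAT$ to classical satisfiability hinges on the padding step: from a single classical model $t$ of $\Phi'$ you manufacture $m$ pairwise distinct total assignments that all agree with $t$ on $\VAR(\Phi)$ and declare the resulting set a team of size $m$. This works only if ``team of size $m$'' counts assignments over the full infinite $\VAR$; under that reading $\mSAT$ collapses to ordinary $\SAT$ for every $m\ge 1$, which would make the variant pointless. The paper does not read it that way: its own proof opens with ``the maximal team size is $2^{\variables}$'' and enumerates all $2^{2^{\variables}}$ teams, i.e.\ it treats a team as a set of assignments over $\VAR(\Phi)$. Under that intended reading your central equivalence is simply false. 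Take $\Phi=\depa{}{x}$ and $m=2$: then $\Phi'=\top$ is classically satisfiable, yet the only two assignments on $\{x\}$ disagree on $x$, so no team of size~$2$ satisfies $\Phi$. The paper's argument is instead a straight brute force over all teams on $\VAR(\Phi)$, model-checking each (team size is bounded by $2^{\variables}$, so Theorem~\ref{mc:teamsize} keeps this fpt) and testing whether some satisfying team has size exactly $m$; the cases $\formula$ and $\formuladepth$ then follow via Lemma~\ref{para:relations} and Proposition~\ref{parameter-bound} just as you suggest.

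A smaller slip: you assert that the empty team satisfies every $\pdl$-formula, but in the paper's semantics $T\models\bot$ is \emph{never} true, so $\emptyset\not\models\bot$, and in fact any formula with $\bot$ as a subformula is unsatisfiable by every team. Your split $T=T\cup\emptyset$ in the $\lor$-case would therefore also break whenever the discarded disjunct contains $\bot$. This is patchable by a preprocessing step; the padding issue above is not.
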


Neither the arity of the dependence atoms nor the team-size alone are fruitful parameters which follows from Corollary~\ref{thm:sat-teamsize-arity}.
\begin{corollary}\label{m-sat:arity}
	$\mSAT$ parameterised by $\teamsize$, or $\arity$ is $\para\NP$-complete.
\end{corollary}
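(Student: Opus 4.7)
The plan is to mirror the argument used for Corollary~\ref{thm:sat-teamsize-arity} and to invoke Proposition~\ref{slice-NP-result} twice: once for $\teamsize$ and once for $\arity$, after first settling the upper bound.

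For the upper bound I would show $\mSAT \in \NP$, which suffices since $\NP \subseteq \para\NP$ under every parameterisation. Because $m$ is supplied in unary, any candidate team has cardinality polynomial in the input, and each assignment is of polynomial bit-size. A nondeterministic machine therefore guesses a team $T$ with $|T| = m$ together with, at every split-junction in the syntax tree of $\Phi$, a decomposition of the incoming sub-team witnessing $T \models \Phi$ (the usual $\NP$-certificate underlying Proposition~\ref{prop:mc-np-complete}); the certificate is then verified in polynomial time at the leaves and conjunction nodes.

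For the $\teamsize$ lower bound I would inspect the $1$-slice. On a singleton team every dependence atom is vacuously satisfied and every split-junction agrees with the classical disjunction, so the $\pdl$ semantics collapses to Tarskian semantics. Hence $\langle \Phi, 1^{1} \rangle$ is a positive $\mSAT$-instance iff the classical $\pl$-formula obtained from $\Phi$ by substituting $\top$ for every occurrence of $\depa{P}{Q}$ is satisfiable. This is $\NP$-complete, so Proposition~\ref{slice-NP-result} yields $\para\NP$-completeness.

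For the $\arity$ lower bound I would use the $0$-slice, which contains all $\pl$-formulas (they trivially satisfy $\arity = 0$, having no dependence atoms at all). Mapping any classical instance $\phi$ to $\langle \phi, 1^{1}\rangle$ is a polynomial reduction from $\SAT$ to this slice by the same singleton-collapse argument, so the $0$-slice is $\NP$-hard and Proposition~\ref{slice-NP-result} applies again. The only non-routine step is the $\NP$-membership argument above, because model checking in $\pdl$ is itself $\NP$-complete and the team must be guessed jointly with a model-checking certificate; once this is absorbed into a single nondeterministic pass, both lower bounds are immediate.
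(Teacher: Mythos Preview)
Your approach is essentially the paper's: the paper proves the corollary in one line by saying it ``follows from Corollary~\ref{thm:sat-teamsize-arity}'', i.e.\ the same $1$-slice (for $\teamsize$) and $0$-slice (for $\arity$) arguments you spell out, and the $\NP$ upper bound is left implicit there. Your added justification of $\mSAT\in\NP$ via guessing the team together with a model-checking witness is correct and worth stating.

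There is, however, one small technical mismatch you should patch. The paper defines $\mSAT$ with the restriction $m\ge 2$, so the instance $\langle\phi,1^{1}\rangle$ you use in both lower bounds is, strictly speaking, not a legal $\mSAT$ instance: neither the $1$-slice for $\teamsize$ nor your reduction into the $0$-slice for $\arity$ is available as written. The fix is routine: work with $m=2$ instead. Given a propositional formula $\phi$, map it to $\langle\phi\land(y\lor\lnot y),1^{2}\rangle$ for a fresh variable $y$. The added conjunct is satisfied by every team, it keeps $\arity=0$, and by flatness of $\pl$-formulas a team of size~$2$ satisfies $\phi\land(y\lor\lnot y)$ iff both assignments satisfy $\phi$; the two values of $y$ supply two distinct such assignments whenever $\phi$ is satisfiable. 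Hence the $2$-slice (for $\teamsize$) and the $0$-slice (for $\arity$) are $\NP$-hard, and Proposition~\ref{slice-NP-result} applies exactly as you intended. The paper's own one-line proof is silent on this point as well, so your version, once adjusted to $m=2$, is in fact more careful than the original.
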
 

\section{Conclusion}
In this paper, we started a systematic study of the parameterised complexity of model checking and satisfiability in propositional dependence logic.
For both problems, we exhibited a complexity dichotomy (see Table~\ref{tbl:overview}): depending on the parameter, the problem is either $\FPT$ or $\para\NP$-complete.
Interestingly, there exist parameters for which $\MC$ is easy, but $\SAT$ is hard (\teamsize) and \emph{vice versa} (\splits).

In the end, we introduced a satisfiability question which also asks for a team of a given size ($\mSAT$).
This has not been studied at all in the setting of team logics, yet.
We pose it as an interesting problem to study.
Here, we leave the cases for \splits and \formulatw open for further research.

As future work, we want to study combinations of the studied parameters, e.g., $\splits+\arity$.
This parameter is quite interesting, as \arity alone is always hard for all three problems, whereas adding \splits allows for $\SAT$ to reach $\FPT$.
It is also interesting to observe that in both of our reductions for proving hardness of $\MC$ under the parametrisation $\splits$ and $\arity$, if $\arity$ is fixed then $\splits$ is unbounded and vice versa. 

Another important question for future research is to consider the parameterised version of validity and implication problem for $\pdl$. 
Finally, we aim, besides answering the open cases, to study further operators such as independence and inclusion atoms.

\bibliography{main}
\newpage
\appendix
\section{Appendix}

\begin{restatetheorem}[mc:splits]
\begin{theorem}
	$\MC$ parameterised by \splits is $\para\NP$-complete.
\end{theorem}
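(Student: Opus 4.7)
The upper bound is immediate from Proposition~\ref{prop:mc-np-complete}, since $\MC$ is already in $\NP$ and restricting the parameter cannot raise the complexity. For the lower bound, I would invoke Proposition~\ref{slice-NP-result} and exhibit a fixed $\ell$-slice of the problem that is $\NP$-hard. The natural choice is to reduce from $3$-colouring ($\problemFont{3COL}$), since a split-junction disjoined three times corresponds in lax team semantics to covering the team by three sub-teams --- a natural encoding of a $3$-partition of the vertex set into colour classes.

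Given a graph $G=(V,E)$ with $V=\{v_1,\ldots,v_n\}$ and $E=\{e_1,\ldots,e_m\}$, the plan is to build an $\MC$-instance $\langle T,\Phi\rangle$ in the style of Figure~\ref{fig:example-team}. The team $T=\{s_1,\ldots,s_n\}$ has one assignment per vertex, over variables $\{x_1,\ldots,x_n\}\cup\{y_{k,i}\mid v_i \text{ incident to } e_k\}$. The assignments are designed so that, for each edge $e_k=\{v_i,v_j\}$, the values of the $y_{k,\cdot}$-variables across $s_i$ and $s_j$ agree in exactly those coordinates needed to trigger the dependence atom $\depa{y_k}{x_i}$ whenever $s_i$ and $s_j$ are placed in the same sub-team, while $s_i(x_i)\neq s_j(x_j)$ prevents the atom from being satisfied. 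The formula is
\[
\Phi \;\dfn\; \varphi \lor \varphi \lor \varphi,\qquad
\varphi \;\dfn\; \bigwedge_{e_k=\{v_i,v_j\}\in E}\depa{y_k}{x_i},
\]
which, implemented as a binary tree, uses exactly two split-junctions, so $\splits(\Phi)=2$ irrespective of the input graph. Thus the reduction lands entirely in the $2$-slice.

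The correctness direction $G\in\problemFont{3COL}\Rightarrow T\models\Phi$ would follow by partitioning $T$ according to a proper $3$-colouring: each colour class $T_c$ is an independent set in $G$, hence no edge $e_k$ has both endpoints in $T_c$, so every dependence atom $\depa{y_k}{x_i}$ is vacuously satisfied on $T_c$ (no pair of assignments in $T_c$ agrees on $y_k$). For the converse, a valid cover $T=T_1\cup T_2\cup T_3$ with $T_c\models\varphi$ would force each $T_c$ to be an independent set by contrapositive: if $s_i,s_j\in T_c$ come from an edge $e_k=\{v_i,v_j\}$, the designed agreement pattern on $y_k$ together with $s_i(x_i)\neq s_j(x_j)$ violates $\depa{y_k}{x_i}$. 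This yields a proper $3$-colouring by assigning each vertex the colour of any sub-team containing its assignment.

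The main technical obstacle will be calibrating the $y_{k,\cdot}$-values across the team so that (i) the agreement pattern on $y_k$ is triggered exactly by adjacent pairs, and (ii) no parasitic dependencies appear between non-adjacent vertices that would spuriously rule out legitimate sub-team covers. The $y$-indexing suggested by Figure~\ref{fig:example-team}, which introduces one coordinate per (edge, vertex) pair, gives enough freedom to encode this purely locally at each edge; once this is verified, the remainder is bookkeeping and an appeal to Proposition~\ref{slice-NP-result}.
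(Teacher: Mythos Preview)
Your proposal is correct and follows the paper's own proof: both reduce $\problemFont{3COL}$ to the $2$-slice of $\MC$ via the team-per-vertex construction of Figure~\ref{fig:example-team} and the formula $\varphi\lor\varphi\lor\varphi$ with $\varphi=\bigwedge_{e_k}\depa{y_k}{x_i}$, and the correctness argument is the same (colour classes $\leftrightarrow$ sub-teams, with $2$-coherence of dependence atoms doing the work). Two small slips to clean up when you write it out: the $y$-variables must range over \emph{all} (edge,\,vertex) pairs as in the figure, not only the incident ones, and the disagreement that kills the atom is $s_i(x_i)\neq s_j(x_i)$, not $s_i(x_i)\neq s_j(x_j)$.
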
	
\end{restatetheorem}
\begin{proof}
We show a reduction from the question of whether a given graph is $3$-colourable.
Given an instance $\langle \calG\rangle$ where $\calG = (V,E)$ is a graph. 
We map this input to an instance $\langle (T, \Phi), 2 \rangle$ where $T$ is a team, and $\Phi$ is a $\pdl$-formula with $2$ split-junctions. 
Let $V= \setdefinition{v_1,\ldots, v_n}$ be the vertex set and $E=\setdefinition{e_1,\ldots e_m}$ the given set of edges. 
Then we define
 $$\VAR(\Phi) =\setdefinition{x_1,\ldots, x_n}\cup \setdefinition{y_{1,1}, \ldots, y_{1,n}, \ldots, y_{m,1}, \ldots, y_{m,n} }.$$ 
That is, we have (1) a variable $x_i$ corresponding to each node $v_i$ and (2) a variable $y_{j,k}$ corresponding to each edge $e_j$ and each node $v_k$. 
For convenience, we will sometimes write ${y_j} $ instead of $(y_{j,1}\ldots y_{j,n}) $ when it is clear that we are talking about the tuple of variables corresponding to the edge $e_j$. 
Consequently, we have an $n$-tuple of variables $y_{j}$ for each edge $e_j$, where $1\leq j\leq m$. 
The idea of the team that we construct is that there is an assignment $s_i$ corresponding to each node $v_i$ that encodes the neighbourhood of $v_i$ in the graph. 
The assignment $s_i$ also encodes all the edges that $v_i$ participates in. 
This is achieved by mapping each variable $y_{\ell,j}$ in tuple $y_\ell$ to $1$ if $v_j\in e_\ell $ whereas $y_{\ell,j}= 0$ if $v_j\not \in e_\ell $ and for every $j\neq i$, $y_{\ell,j}= 1$.  
Figure~\ref{fig:example-team} illustrates an example to get an intuition on this construction. 

Formally, we define the team as follows.
\begin{enumerate}
	\item If $\mathcal G$ has an edge $e_\ell = \setdefinition{v_i,v_j}$ then we set $s_i(x_j) = 1$ and $s_j(x_i) = 1$, and let $s_i(y_{\ell,1})=\ldots =s_i(y_{\ell,n})=1$ as well as $s_j(y_{\ell,1})=\ldots= s_j(y_{\ell,n})=1$ 
	\item For the case $v_j\not \in e_{\ell}$, we set $s_j(y_{\ell,j}) = 0$ and for the remaining indices $s_j(y_{\ell,i}) = 1$.
	\item Since, we can assume w.l.o.g.\ the graph has no loops (self-edges) we always have $s_i(x_i)=0$ for all $1\leq i\leq n$.
\end{enumerate}

Consequently, two assignments $s_i, s_j$ agree on $y_k$ if the corresponding edge $e_k$ is the edge between $v_i$ and $v_j$, and we have $s_i(y_k) = 1 = s_j(y_k)$.\\
Now let $\Phi$ be the following formula
$$ \Phi \dfn \bigwedge\limits_{e_k= \setdefinition{v_i, v_j}}  \depa{y_k}{x_i}\lor \bigwedge\limits_{e_k= \setdefinition{v_i, v_j}}  \depa{y_k}{x_i}\lor \bigwedge\limits_{e_k= \setdefinition{v_i, v_j}}  \depa{y_k}{x_i}$$
The choice of $x_i$ or $x_j$ to appear in the formula is irrelevant. 
The idea is that if there is an edge $e_k$ between two nodes $v_i, v_j$ and accordingly $s_i(y_k)=1 = s_j(y_k)$ then the two assignments cannot be in the same split of team.
This is always true because in that case the assignments $s_i, s_j$ cannot agree on any of $x_i$ or $x_j$. 
Since, by (3.), we have $s_i(x_i)=0$ but there is an edge to $v_j$ and we have $s_j(x_i)=1$.
The desired result is achieved by the following claim.
\begin{claim}
	$\calG$ is $3$-colourable iff $\{s_1,\dots,s_n\}\models \Phi$
\end{claim}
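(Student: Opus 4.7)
The plan is to reduce the claim to the following local observation about when two assignments $s_\ell, s_m$ in the constructed team can agree on a tuple $y_k$, and to argue that each disjunct in $\Phi$ forces the corresponding subteam to be an independent set of $\calG$.

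First, I would examine the tuple $y_k$ associated to an edge $e_k = \{v_i, v_j\}$. By construction, $s_i(y_k)$ and $s_j(y_k)$ are both the all-ones vector, while every other assignment $s_\ell$ (for $\ell \notin \{i,j\}$) has $s_\ell(y_{k,\ell}) = 0$, so $s_\ell(y_k)$ differs from the all-ones vector on coordinate $\ell$. Hence, for any two distinct assignments $s_\ell, s_m$ in the team, we have $s_\ell(y_k) = s_m(y_k)$ if and only if $\{v_\ell,v_m\} = e_k$. In that case, using (3.) together with the neighbourhood-encoding rule (1.), $s_i(x_i)=0$ while $s_j(x_i)=1$, so $s_i$ and $s_j$ disagree on $x_i$. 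Thus the dependence atom $\depa{y_k}{x_i}$ is violated precisely when a subteam contains both $s_i$ and $s_j$.

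From this I conclude the key lemma: a subteam $T' \subseteq \{s_1,\dots,s_n\}$ satisfies the conjunction $\bigwedge_{e_k=\{v_i,v_j\}} \depa{y_k}{x_i}$ if and only if the vertex set $\{\,v_i \mid s_i \in T'\,\}$ is an independent set of $\calG$. Indeed, the only way a dependence atom $\depa{y_k}{x_i}$ can fail on $T'$ is when two assignments agree on $y_k$ and disagree on $x_i$, and by the previous paragraph this happens iff $T'$ contains both endpoints of $e_k$.

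With the lemma in hand, the claim follows directly from the (lax) team semantics of the threefold split-junction in $\Phi$. For the forward direction, a proper $3$-colouring $c\colon V \to \{1,2,3\}$ induces a partition $T = T_1 \cup T_2 \cup T_3$ with $T_r = \{\,s_i \mid c(v_i) = r\,\}$; each colour class is independent, so each $T_r$ satisfies the conjunction, hence $T\models\Phi$. For the converse, any witness decomposition $T = T_1 \cup T_2 \cup T_3$ for $\Phi$ gives three (possibly overlapping) independent sets whose union covers all vertices; assigning to each $v_i$ the least $r$ with $s_i \in T_r$ yields a proper $3$-colouring. The main point to be careful about is the lax-versus-strict distinction: since overlapping covers by independent sets yield proper colourings just as well (by the ``least colour'' tie-break), the argument is insensitive to the choice of semantics, in agreement with the remark in the preliminaries.
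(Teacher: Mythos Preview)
Your proof is correct and follows essentially the same approach as the paper's: both hinge on the observation that two assignments $s_i,s_j$ agree on the tuple $y_k$ precisely when $e_k=\{v_i,v_j\}$, and in that case disagree on $x_i$, so a subteam satisfies the big conjunction iff the corresponding vertex set is independent. Your packaging of this as an explicit ``key lemma'' and your handling of the lax semantics via the least-colour tie-break are cleaner than the paper's presentation, but the underlying argument is the same.
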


\begin{claimproof}
	``$\Rightarrow$'': Let $V_1,V_2, V_3$ be the distribution of $V$ into three colours. 
    Consequently, for every $v \in  V $ we have an $r\leq 3$ such that $v\in V_r$. 
    Moreover, for every $v_i, v_j\in V_r$ there is no $\ell$ s.t.\ $e_\ell=\setdefinition{v_i, v_j}$. 
    Let $T_r =\setdefinition{s_i \mid v_i\in V_r}$ for each $r\leq 3$, then we show that $\bigcup\limits_ {r<3} T_r = T$ and $T_r\models \phi$. 
    This will prove that $T\models \Phi$ because we can split $T$ into three sub-teams such that each satisfies the disjunct.

    Since for each $s_i, s_j \in T_r$, there is no edge $e_\ell = \setdefinition{{v_i, v_j}}$ this implies that for the tuple $y_\ell$, we have $s_i(y_\ell) \neq s_j(y_\ell)$ and thereby making the dependence atom trivially true. 
    Moreover, by $2$-coherency it is enough to check only for pairs $s_i,s_j$ and since the condition holds for every edge, we have $T_r \models \phi$. 
    Since our assumption is that $V$ can be split into three such sets, we have the split of $T$ into three sub-teams. 
    This gives $T\models \Phi$.
    
``$\Leftarrow$'': Conversely, assume that $T$ can be split into three sub-teams each satisfying $\phi$. 
    Then we show that $V_1,V_2, V_3$ is the partition of $V$ into three colours. 
    Let $V_r =\setdefinition{{v_i \mid s_i \in T_r}}$ then $\bigcup\limits_{r\leq 3} = V$ and for any $v_i, v_j \in V_r$ there is no edge between $v_i, v_j$. 
    Suppose to the contrary that there is an edge $e_\ell = \setdefinition{v_i, v_j} $. 
    Then we must have $s_i, s_j \in T_r$ such that $s_i(y_{\ell})= 1 = s_j(y_{\ell})$. 
    That is, $s_i(y_{\ell,1})= 1 = s_j(y_{\ell,1}), \ldots, s_i(y_{\ell, n})= 1 = s_j(y_{\ell, n})$. 
    Since we have that $s_i(x_i)= 0$ whereas $s_j(x_i)= 1$, this implies $\setdefinition{s_i, s_j} \not \models \phi$ which is a contradiction. \qedclaim
\end{claimproof}
This concludes the full proof.
\end{proof}

\begin{restatetheorem}[mc:remaining-cases]
\begin{theorem}
	$\MC$ for the parameters \formula, \variables, or \formuladepth is $\FPT$.
\end{theorem}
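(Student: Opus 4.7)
The plan is to reduce all three parameterisations to the already established FPT result for $\MC$ under \teamsize (Theorem~\ref{mc:teamsize}) via Proposition~\ref{parameter-bound}, by exploiting the hierarchy of parameter bounds recorded in Lemma~\ref{para:relations}.

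First, I would normalise the input team: since $T\models\Phi$ depends only on the restriction of each $s\in T$ to $\VAR(\Phi)$, we may collapse $T$ to $T'\dfn\{\,s\!\restriction\!\VAR(\Phi)\mid s\in T\,\}$ in a linear preprocessing pass. This deduplication guarantees the bound $|T'|\le 2^{\variables(\Phi)}$ underlying Lemma~\ref{para:relations}(1), and the truth value is preserved.

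After this normalisation, each of the three cases follows by chaining the bounds of Lemma~\ref{para:relations}. For the parameter \variables, we have $\teamsize\le 2^{\variables}$ directly. For \formula, Lemma~\ref{para:relations}(2) gives $\teamsize\le 2^{\formula}$. For \formuladepth, Lemma~\ref{para:relations}(3) combined with (2) yields $\teamsize\le 2^{\formula}\le 2^{2^{2\cdot\formuladepth}}$. In every case \teamsize is bounded by a computable function of the respective parameter, so Proposition~\ref{parameter-bound} together with Theorem~\ref{mc:teamsize} places the problem in \FPT.

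There is no real obstacle here; the only point that needs a line of justification is the initial restriction of $T$ to $\VAR(\Phi)$, which ensures that the (syntactic) team size is controlled by the number of variables in the formula. Once this is done, the result is a direct transfer along the chain $\formuladepth \leadsto \formula \leadsto \variables \leadsto \teamsize$ furnished by Lemma~\ref{para:relations}. Upper bounds for completeness follow from the classical $\NP$-upper bound of Proposition~\ref{prop:mc-np-complete}; hardness is inherited by the already shown intractability results, but no hardness claim is asserted in the theorem itself, so it suffices to exhibit the fpt-algorithm.
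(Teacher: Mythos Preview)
Your proposal is correct and follows essentially the same route as the paper, which simply invokes Proposition~\ref{parameter-bound} together with Lemma~\ref{para:relations} and Theorem~\ref{mc:teamsize}. Your explicit normalisation of $T$ to $\VAR(\Phi)$ is a welcome clarification that the paper leaves implicit in the proof of Lemma~\ref{para:relations}(1), but otherwise the argument is identical.
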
 
	
\end{restatetheorem}
\begin{proof}
This follows due to Proposition~\ref{parameter-bound} in conjunction with Lemma~\ref{para:relations} and results in earlier section on $\MC$.
\end{proof}
\begin{restatetheorem}[sat:splits]
\begin{theorem}
	$\SAT$ parameterised by \splits is $\FPT$. 
	Moreover, there is an algorithm that solves the problem in $O(2^{\splits(\Phi)}\cdot |\Phi|^{O(1)})$ for any $\Phi\in\PDL$.
\end{theorem}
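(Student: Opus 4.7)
The plan is to exploit the downward closure of $\pdl$ to reduce $\SAT$ to finding a classical satisfying assignment, and then implement the bottom-up procedure sketched in the paragraphs preceding the theorem as a dynamic programming over the syntax tree whose exponential blow-up is controlled exactly by $\splits(\Phi)$.

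First I would recall that since $\pdl$ is downward closed, $\Phi$ has a satisfying team iff it has a singleton satisfying team. Under a singleton team every dependence atom is trivially satisfied and both $\lor$ and $\land$ coincide with their classical counterparts; hence the task reduces to deciding whether a classical assignment $s$ makes $\Phi$ true after replacing every occurrence of $\depa{P}{Q}$ by $\top$. This substitution does not change $\splits(\Phi)$. I would then formalise the described procedure by labelling each subformula $\psi$ with a set $L_\psi$ of partial assignments to $\VAR(\Phi)$ in which each variable is either $0$, $1$, undefined, or carries a distinguished value $c$ for \emph{conflict}. The rules are: $L_x \dfn \{\{x\mapsto 1\}\}$; $L_{\lnot x} \dfn \{\{x\mapsto 0\}\}$; $L_{\depa{P}{Q}} \dfn \{\emptyset\}$; at a conjunction $\psi_1\land\psi_2$, $L_\psi$ is the set of all pairwise merges $s_1 \sqcup s_2$ for $s_i\in L_{\psi_i}$, where $\sqcup$ keeps the unique defined value on variables where the inputs agree or exactly one is defined, and records $c$ on any variable where they disagree; at a disjunction $\psi_1\lor\psi_2$, $L_\psi \dfn L_{\psi_1}\cup L_{\psi_2}$. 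The algorithm answers yes iff some element of $L_\Phi$ contains no occurrence of $c$.

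For the complexity bound I would prove by structural induction that $|L_\psi|\le 2^{k(\psi)}$, where $k(\psi)$ counts the split-junctions in $\psi$. The base cases give $|L_\psi|\le 1$. At a conjunction $k(\psi)=k(\psi_1)+k(\psi_2)$ and the product bound yields $|L_{\psi_1}|\cdot|L_{\psi_2}|\le 2^{k(\psi_1)+k(\psi_2)}=2^{k(\psi)}$; at a disjunction $k(\psi)=k(\psi_1)+k(\psi_2)+1$ and the sum bound yields $2^{k(\psi_1)}+2^{k(\psi_2)}\le 2^{k(\psi_1)+k(\psi_2)+1}=2^{k(\psi)}$. Each merge operation handles at most $O(|\VAR(\Phi)|)$ coordinates, and the syntax tree has $O(|\Phi|)$ nodes, which gives the overall running time $O(2^{\splits(\Phi)}\cdot|\Phi|^{O(1)})$, establishing both the \FPT membership and the explicit bound in the statement.

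The main obstacle will be the correctness argument. I would prove by structural induction on $\psi$ the invariant that (i) every conflict-free $s \in L_\psi$ can be extended to a total assignment satisfying $\psi$ classically, and (ii) every total assignment satisfying $\psi$ restricts to some conflict-free element of $L_\psi$. The delicate cases are the conjunction node, where one must verify that the merge operation introduces a conflict on exactly those variables on which two independently chosen partial assignments from the children clash, and the dependence-atom leaf, where $L_\psi = \{\emptyset\}$ must correctly encode the fact that the atom is trivially satisfied by every singleton and therefore imposes no constraint on the rest of the formula. Once the invariant is established for the root, acceptance coincides with satisfiability of $\Phi$, completing the proof.
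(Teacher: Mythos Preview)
Your proposal is correct and follows essentially the same bottom-up labelling scheme as the paper's Algorithm~\ref{algo-splits}, with the same conflict-recording merge at conjunctions and union at disjunctions; your treatment of dependence atoms via the singleton $\{\emptyset\}$ is in fact slightly cleaner than the paper's special marker $\{1\}$, since it avoids the extra case distinction at inner nodes. The only omissions are the leaf cases $\top$, $\bot$, and $\lnot\depa{P}{Q}$, which fit your framework trivially (take $L_\top=\{\emptyset\}$ and $L_\bot=L_{\lnot\depa{P}{Q}}=\emptyset$, or a forced-conflict singleton as the paper does).
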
	
\end{restatetheorem}

\begin{proof} 
We consider partial mappings of the kind $t\colon\VAR\to\setdefinition{0,1,c}$.
Intuitively, these mappings are used to find a satisfying assignment in the process of the presented algorithm.

If $t,t'$ are two (partial) mappings then 
$t\oconflict t'$ is the assignment such that 
$$
(t\oconflict t')(x)\dfn
\begin{cases}
	c &, \text{if both are defined and }t(x)\neq t'(x)\\
	t(x) &, \text{if $t(x)$ is defined}\\
	t'(x) &, \text{if $t'(x)$ is defined}\\
	\text{undefined} &,\text{otherwise}.
\end{cases}
$$
We prove the following claim.
\begin{claim}
The formula $\Phi$ is satisfiable if and only if Algorithm \ref{algo-splits} returns a consistent (partial) assignment $s$ that can be extended to a satisfying assignment for $\Phi$ over $\VAR(\Phi)$.
\end{claim}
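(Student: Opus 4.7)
The plan is to prove the biconditional by structural induction on the syntax tree of $\Phi$, leveraging the reduction to singleton teams that was justified earlier via downward closure of \pdl. I would first formulate and then establish the following invariant: for every node $v$ labelled by a subformula $\psi_v$, the set $L_v$ of partial assignments stored at $v$ by the algorithm contains a $c$-free element $p$ if and only if there exists a total assignment $t\colon\VAR(\Phi)\to\{0,1\}$ extending $p$ on the positions where $p$ is defined such that $\{t\}\models\psi_v$. Instantiating this invariant at the root delivers the claim.

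For the base step, at a leaf carrying a literal $x$ (respectively $\neg x$) the algorithm stores the singleton partial assignment $\{x\mapsto 1\}$ (respectively $\{x\mapsto 0\}$), whose extensions are exactly the assignments satisfying the literal; at a leaf carrying a dependence atom the algorithm stores the everywhere-undefined partial assignment, which is sound because every singleton team satisfies every dependence atom. For the inductive step at a conjunction $\alpha_1\land\alpha_2$, a total $t$ satisfies $\alpha_1\land\alpha_2$ iff it satisfies both conjuncts, which by the inductive hypothesis means $t$ extends some $p_1\in L_{\alpha_1}$ and some $p_2\in L_{\alpha_2}$; such a $t$ exists precisely when $p_1\oconflict p_2$ carries no $c$ on its defined positions, which is the local test performed by the algorithm. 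For a disjunction $\alpha_1\lor\alpha_2$, because we are working with singletons the Tarskian reading applies: $t$ satisfies $\alpha_1\lor\alpha_2$ iff it satisfies at least one disjunct, which is faithfully mirrored by the algorithm taking the union $L_{\alpha_1}\cup L_{\alpha_2}$ rather than combining entries.

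To finish, if $\Phi$ is satisfiable then by downward closure there is a singleton satisfying team $\{t\}$, and the invariant yields a $c$-free partial assignment at the root consistent with $t$, so the algorithm returns some such $s$; conversely, any $c$-free $s$ output at the root can be completed on undefined variables to a total assignment which, by the invariant applied at the root, satisfies $\Phi$. The bound on the number of candidate partial assignments — and hence the $O(2^{\splits(\Phi)}\cdot|\Phi|^{O(1)})$ running time — falls out of the induction: only $\lor$-nodes increase $|L_v|$, and each does so by at most a factor of two, while $\land$-nodes combine pairs without blowing up the set because the $\oconflict$ of two entries yields a single entry.

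The main obstacle I anticipate is the bookkeeping at conjunction nodes, namely arguing cleanly that the local $\oconflict$-test loses no information: one has to justify that marking a variable with $c$ is genuinely fatal for that particular combination of choices rather than a transient conflict that could be repaired higher up. This hinges on two observations that I would make explicit, that $\oconflict$ is associative up to the presence of $c$-marks so the order in which sibling conjuncts are processed is immaterial, and that once a variable is set by any branch, subsequent conjunctions can only constrain it further, never override it; together these imply that a $c$ produced at any $\land$-node correctly certifies unsatisfiability of that specific combination of sub-choices, which is precisely what is needed to make the pruning step sound.
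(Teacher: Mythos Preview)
Your proposal is correct and takes essentially the same approach as the paper: structural induction on the syntax tree, with the same handling of leaves (literals by the obvious singleton, dependence atoms as unconstrained), conjunctions via $\oconflict$, and disjunctions via union of the children's label sets. One minor slip, though it lies outside the claim proper: your running-time remark that ``$\land$-nodes combine pairs without blowing up the set because the $\oconflict$ of two entries yields a single entry'' is misleading, since a conjunction with children of label sizes $a$ and $b$ produces up to $a\cdot b$ entries; the $2^{\splits}$ bound holds not because conjunctions are harmless but because leaves start at size~$1$ and the product over the whole tree is governed by the total number of disjunctions, which is the parameter.
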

\begin{claimproof}
We prove using induction on the structure of $\Phi$.

\begin{description}
	\item[Base case] Start with $\Phi=X$ is a variable.
    Then $\Phi$ is satisfiable and ${s}\models \Phi$ such that $s(x)=1$.
    Moreover, such an assignment is returned by the procedure as depicted by line~3 of the algorithm.
    Similarly, the case $\Phi=\neg X$ follows by line~4. 
    The case $\Phi = \depa{P}{Q}$ or $\Phi = \top$ is a special case of a \pdl-formula since this is true under any assignment. 
    Line~6 in our procedure returns such an assignment that can be extended to any consistent assignment.
    Finally, for $\Phi = \bot$, the assignment contains a conflict and can not be extended to a consistent assignment as the algorithm returns ``$\Phi$ is not satisfiable''.
    \item[Induction Step] Notice first that if either of the two operands is $\top$ then this is a special case and triggers lines 9--11 of the algorithm thereby giving the satisfying assignment.

    Suppose now that $\Phi=\psi_1\land \psi_2$ and that the claim is true for $\psi_1$ and $\psi_2$. 
    As a result, both $\psi_1$ and $\psi_2$ are satisfiable if and only if the algorithm returns a satisfying assignment for each.
    Let $S_i$ for $i=0,1$ be such that every consistent $t'_i\in S_i$ can be extended to a satisfying assignment for $\psi_i$. 
    Our claim is that $S_\Phi$ returned by the procedure (line 13) is non-empty and contains a consistent assignment for $\Phi$ if and only if $\Phi$ is satisfiable.
    First note that, by induction hypothesis, $S_i$ contains all the possible partial assignments that satisfy $\psi_i$ for $i=0,1$. 
    Consequently, $S_\Phi$ contains all the possible $\oconflict$-joins of such assignments that can satisfy $\Phi$. 
    Let $\psi_0$ be satisfied by an assignment $s_0$ and $\psi_1$ be satisfied by an assignment $s_1$. 
    Moreover, let $s' \in S_\Phi$ be an assignment such that $s'=s_0 \oconflict s_1$. 
    If $s'$ is consistent then $s'$ can be extended to a satisfying assignment $s$ for $\Phi$ since ${s'}\models \psi_i$ for $i=0,1$. 
    On the other hand if every $s=s_0 \oconflict s_1$ is conflicting then there is no assignment over $\VAR(\Phi) = \VAR(\psi_1) \cup \VAR(\psi_2)$ that satisfies $\Phi$.
    Accordingly, $\Phi$ is not satisfiable.

    The case for split-junction is simpler. 
    Suppose that $\Phi=\psi_0\lor \psi_1$ and that the claim is true for $\psi_1$ and $\psi_2$. 
    Then $\Phi$ is satisfiable if and only if either $\psi_0$ or $\psi_1$ is satisfiable. 
    Since the label $S_\Phi$ for $\Phi$ is the union of all the labels from $\psi_0$ and $\psi_1$, it is enough to check that either the label of $\psi_0$ ($S_0$) or the label of $\psi_1$ ($S_1$) contains a consistent partial assignment.
    By induction hypothesis, this is equivalent to checking whether $\psi_0$ or $\psi_1$ is satisfiable. 
    This completes the case for split-junction and the proof to our claim. \qedclaim
\end{description}
\end{claimproof}
\begin{algorithm}[t]\LinesNumbered
\caption{SAT-algorithm for \splits which tries to find a satisfying singleton team.}\label{algo-splits}
\SetKwInOut{Input}{Input}
\SetKwInOut{Output}{Output}
\Input{$\PDL$-formula $\Phi$ represented by a syntax tree}
\Output{A set of partial assignments $S$ such that $S\models \Phi$ or ``$\Phi$ is not satisfiable''}
\Begin {
	\ForEach{Leaf $\ell$ of the syntax tree}{
		\lIf{$\ell = X$ is a variable}{$S_\ell \leftarrow \setdefinition{\setdefinition{x\mapsto1}}$}
		\lElseIf{$\ell = \neg X $ is a negated variable}{$S_\ell \leftarrow \setdefinition{\setdefinition{x\mapsto 0}}$}
		\lElseIf{$\ell=\bot$}{pick any $x\in\VAR$ and set $S_\ell \leftarrow \setdefinition{\setdefinition{x\mapsto c}}$}
		\lElse(\hfill\small\ttfamily // case $\top$ or $\depa{\cdot}{\cdot}$) {$S_\ell\leftarrow\setdefinition{1}$}
	}
	\ForEach{Inner node $\ell$ of the syntax tree in bottom-up order}{
		Let $\ell_{0}, \ell_{1}$ be the children of $\ell$ with $S_0, S_1$ the resp.~sets partial assignments\;
		\uIf{$1\in S_i$}{
			\lIf{$\ell$ is a conjunction}{$S_\ell\leftarrow S_{1-i}$}
			\lElse(\hfill\small\ttfamily // empty split for split-junction){$S_\ell\leftarrow\setdefinition{1}$}
		}
		\uElseIf{$\ell$ is a conjunction}{
			\lForEach{$s_0\in S_0$ and $s_1\in S_1$}{
				$S_\ell\leftarrow S_\ell\cup\setdefinition{{s_0\oconflict s_1}}$
			}
		}
		\Else{
			\lForEach{$s_0\in S_0$ and $s_1\in S_1$}{
				$S_\ell\leftarrow S_\ell\cup\setdefinition{s_0, s_1}$
			}
		}
	}
	\lIf{there exists a non-conflicting assignment $s$ in root node}{
		\Return $s$}
	\lElse{\Return ``$\Phi$ is not satisfiable''}
}

\end{algorithm}
Finally, notice that the label size adds at the occurrence of a split-junction. 
That is, we keep all the assignment candidates separate and each such candidate is present in the label for split-junction node. 
In contrast, at conjunction nodes, we `join' the assignments and for this reason, the label size is the product of the two labels.
Notice that we do not get a blow-up in the number of conjunction.
This is because, initially the label size for each node is $1$ and only at a split junction, the size increases.
This implies that the maximum size for any label is bounded by  $2^{\splits}$.
As a consequence, the above algorithm runs in polynomial time in the input and exponential in the parameter. 
\end{proof}
Figure~\ref{ex1:tree-splits} presents an example of using the above algorithm.
To simplify the notation, we consider the assignment labels of the form $\{x_i,\neg x_j\}$ rather than $\{x_i\mapsto 1, x_j\mapsto 0\}$.

\begin{figure}
\centering
\begin{tikzpicture}[scale=0.85,
	level 1/.style={sibling distance=7em, level distance= 3 em},
	level 2/.style={sibling distance=5.5em, level distance= 4 em}, 
	level 3/.style={sibling distance=2.7em, level distance= 4 em}, 
	level 4/.style={sibling distance=2.7em, level distance= 4 em},  
	every node/.style={scale=0.8},
	edge from parent/.style={thin,-,black, draw},
	 leaf/.style = {draw, circle}]
\node  {\textcolor{black}{$\land$}}
	child {node {$\land$} 
		child {node {$\land$}
			child {node {$x_4$}}
			child {node {$x_1$}}}
		child {node {$\neg x_2$}}}
	child {node {$\land$} 
		child {node {$\lor$}
			child {node {$\land$}
				child{node {$x_1$}}
				child{node {$x_2$}}}
			child {node {$\scriptstyle \depa{x_3}{x_4}$}}}		
		child {node {$\lor$}	
				child{node {$x_3$}} 
				child{node {$\neg x_1$}}}};
\end{tikzpicture} 
\hfill
\begin{tikzpicture}[scale=0.85,
	level 1/.style={sibling distance=10em, level distance= 3 em},
	level 2/.style={sibling distance=6em, level distance= 4 em}, 
	level 3/.style={sibling distance=3.5em, level distance= 4 em}, 
	level 4/.style={sibling distance=4em, level distance= 4 em},  
	every node/.style={rectangle, draw, scale=0.8},
	edge from parent/.style={thin,-,black, draw},
	 leaf/.style = {draw, circle}]
\node  {\textcolor{black}{$\setdefinition{x_4,x_1,\neg x_2,x_3}, \setdefinition{x_4,\neg x_2, x_1^c}$}}
	child {node {$\setdefinition{x_4,x_1,\neg x_2}$} 
		child {node {$\setdefinition{x_4,x_1}$}
			child {node {$\setdefinition{x_4}$}}
			child {node {$\setdefinition{x_1}$}}}
		child {node {$\setdefinition{\neg x_2} $}}}
	child {node {$\setdefinition{x_3}, \setdefinition{\neg x_1}$} 
		child {node {$1$}
			child {node {$\setdefinition{x_1, x_2}$}
				child{node {$\setdefinition{x_1}$}}
				child{node {$\setdefinition{ x_2}$}}}
			child {node {$1$}}}			
			child {node {$\setdefinition{x_3}, \setdefinition{\neg x_1}$}
				child{node {$\setdefinition{x_3}$}} 
				child{node {$\setdefinition{\neg x_1}$}}}};
\end{tikzpicture}\caption{(left) syntax tree of example formula, and (right) computation of Algorithm~\ref{algo-splits}. Notation: $x$/$\lnot x$/$x^c$ means a variable is set to true/false/conflict. Clearly, $\setdefinition{x_4,x_1,\neg x_2,x_3}$ satisfies the formula.}\label{ex1:tree-splits}
\end{figure}
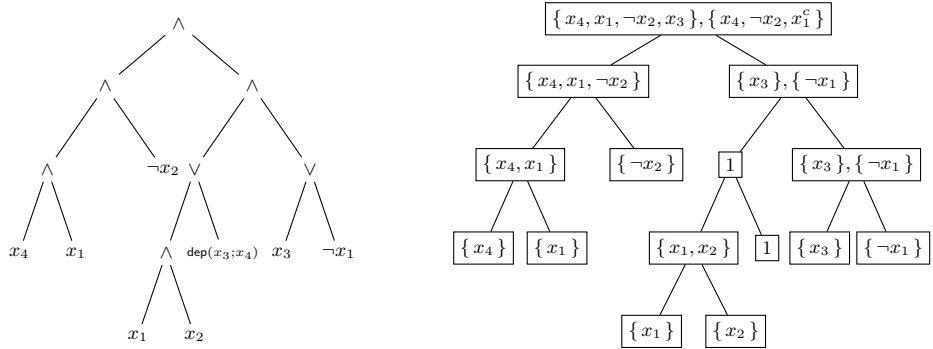

\begin{restatetheorem}[sat:remaining-cases]
\begin{theorem}
	$\SAT$ parameterised by \variables, \formula, or \formuladepth is in $\FPT$.
\end{theorem}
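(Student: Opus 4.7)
My plan is to piggy-back on the downward closure of \pdl together with the parameter relationships already established in Lemma~\ref{para:relations} and the transfer principle of Proposition~\ref{parameter-bound}. The starting point, already exploited for Corollary~\ref{sat:formulatw}, is the observation that because \pdl is downwards closed, a formula $\Phi$ is satisfiable if and only if there is a singleton team satisfying it; and on a singleton team every dependence atom is trivially true and split-junction behaves as classical disjunction. Consequently $\SAT$ for \pdl reduces in linear time to classical propositional $\SAT$ on the formula obtained from $\Phi$ by replacing every dependence atom with $\top$.

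First I would settle the parameter \variables. Given the reduction above, it suffices to try each of the at most $2^{\variables(\Phi)}$ assignments over $\VAR(\Phi)$ and evaluate the resulting (classical) formula in polynomial time. This brute-force algorithm runs in $O^\star(2^{\variables(\Phi)})$ time, putting $\SAT$ parameterised by \variables into \FPT.

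For the remaining two parameters I would use Proposition~\ref{parameter-bound} to transfer this \FPT-result. Since every variable occurring in $\Phi$ contributes at least one symbol to the formula, we have $\variables(\Phi)\le\formula(\Phi)$, so bounding \formula bounds \variables, and \FPT-membership transfers directly. For \formuladepth, Lemma~\ref{para:relations} already gives $\formula(\Phi)\le 2^{2\cdot\formuladepth(\Phi)}$, and Proposition~\ref{parameter-bound} then chains the \FPT-result from \formula (or directly from \variables, using $\variables(\Phi)\le 2^{2\cdot\formuladepth(\Phi)}$) to \formuladepth.

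I do not expect a genuine obstacle here: the only substantive point is the reduction to classical $\SAT$ via downward closure, and everything else is the standard monotonicity argument encapsulated in Proposition~\ref{parameter-bound}. The care needed is merely to spell out explicitly that $\variables(\Phi)\le\formula(\Phi)$, which is used implicitly in the proof of Lemma~\ref{para:relations} but not stated as a separate inequality.
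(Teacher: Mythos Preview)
Your proposal is correct and essentially coincides with the paper's proof: both reduce to classical $\SAT$ via downward closure, handle \variables{} by brute force over assignments, and derive \formuladepth{} from the bound in Lemma~\ref{para:relations}. The only cosmetic difference is that for \formula{} the paper observes directly that $\formula=|\Phi|$ equals the input size (so \FPT{} is immediate), whereas you route through $\variables\le\formula$ and Proposition~\ref{parameter-bound}; either argument is fine.
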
 
\end{restatetheorem}
\begin{proof}
Regarding $\variables$, the question \pdl-$\SAT$ boils down to \pl-$\SAT$ of finding an assignment for a given propositional logic formula. 
The latter problem, when parameterised by the number of variables in the input formula, is $\FPT$ which implies that the former problem is also $\FPT$.

Note that $\formula=|\Phi|$ and any PP $\Pi$ is $\FPT$ for the parametrisation input-length. 
Consequently, $\SAT$ parameterised by $\formula$ is $\FPT$. 

If a formula $\Phi$ has $\formuladepth=d$ then there are $\leq 2^d$ leaves and $\leq 2^d$ internal nodes accordingly we have $\formula \leq 2^{2d}$ which shows membership $\FPT$ parameterised by \formuladepth.
\end{proof}

\begin{restatetheorem}[m-sat:all-cases]
\begin{theorem}
	$\mSAT$ parameterised by $\variables$, $\formula$, or \linebreak $\formuladepth$ is in $\FPT$.
\end{theorem}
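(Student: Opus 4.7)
The plan is to establish $\FPT$ membership under the parameter $\variables$ by a direct brute-force enumeration combined with the model-checking algorithm from Theorem~\ref{mc:teamsize}, and then transfer the result to $\formula$ and $\formuladepth$ via Proposition~\ref{parameter-bound} using the parameter bounds $\variables \leq \formula \leq 2^{2\cdot \formuladepth}$ given by Lemma~\ref{para:relations}.

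For the parameter $\variables = n$, the key observation is that a team over $\VAR(\Phi)$ is a subset of the $2^n$ possible assignments to these $n$ variables. First, if $m > 2^n$ the instance is immediately rejected, since no set of $m$ distinct assignments exists. Otherwise, we enumerate all $\binom{2^n}{m} \leq 2^{2^n}$ subsets $T$ of size exactly $m$ and for each one invoke the model-checking algorithm of Theorem~\ref{mc:teamsize} to test whether $T \models \Phi$. Since that algorithm runs in time $O^\star(2^{2|T|})$, and here $|T| = m \leq 2^n$, every individual MC call takes time $g(n) \cdot \operatorname{poly}(|\Phi|)$ for a computable $g$. The overall running time is therefore bounded by $2^{2^n} \cdot g(n) \cdot \operatorname{poly}(|\Phi| + m)$, which is of the form $h(n) \cdot \operatorname{poly}(|x|)$ for some computable $h$, establishing $\FPT$ membership for the parameter $\variables$.

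For the other two parameterisations, Lemma~\ref{para:relations} gives $\variables \leq \formula$ (the number of variables is trivially bounded by the formula length) and $\formula \leq 2^{2\cdot \formuladepth}$, so in both cases $\variables$ is bounded by a computable function of the parameter at hand. By Proposition~\ref{parameter-bound} the $\FPT$ algorithm for $\variables$ immediately yields $\FPT$ algorithms for $\formula$ and for $\formuladepth$.

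There is no substantial obstacle here; the only subtlety is to ensure that the unary encoding of $m$ keeps $m$ polynomial in $|x|$, so that the $\operatorname{poly}(m)$ factors coming from the MC subroutine are absorbed into $\operatorname{poly}(|x|)$, and to reject the easy case $m > 2^n$ up front so that the enumeration does not blow up beyond a function of $n$.
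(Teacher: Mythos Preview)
Your proposal is correct and follows essentially the same strategy as the paper: brute-force over all candidate teams (bounded by $2^{2^{\variables}}$) together with the $\FPT$ model-checking routine, then lift to the other parameters via the size bounds of Lemma~\ref{para:relations}. The only cosmetic difference is that for $\formula$ the paper argues directly that the parameter equals the input length (so every decidable problem is $\FPT$), whereas you route through $\variables\leq\formula$ and Proposition~\ref{parameter-bound}; both are immediate.
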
	
\end{restatetheorem}

\begin{proof}
In the case of \variables, the maximal team size is $2^{\variables}$. 
 Moreover, there are $2^{2^{\variables}} $ such teams and we can find all the satisfying teams in fpt-time with respect to $\variables$ and then check whether any such team has size $m$.

 For $\formula$ note that $\formula\leq|x|$ and $\Pi \in \FPT$ for any problem PP $\Pi$ with parameter input-size. 
 As a result, we have that $\mSAT$ parameterised by $\formula$ is $\FPT$

 For $\formuladepth$ notice that $\formula \leq 2^{2\cdot\formuladepth}$ and thereby the problem is $\FPT$ under this parametrisation.
\end{proof}

\end{document}